\algnewcommand\algorithmicinput{\textbf{Input:}}
\algnewcommand\algorithmicoutput{\textbf{Output:}}
\algnewcommand\Input{\item[\algorithmicinput]}
\algnewcommand\Output{\item[\algorithmicoutput]}
\begin{document}

\title{Path Graphs, Clique Trees, and Flowers \thanks{With apologies to Jack Edmonds \cite{11}.}}
\author{Lalla Mouatadid \quad \quad Robert Robere \\ Department of Computer Science \\ University of Toronto}

\maketitle
\begin{abstract}
  An \emph{asteroidal triple} is a set of three independent vertices in a graph such that any two vertices in the set are connected by a path which avoids the neighbourhood of the third.
  A classical result by Lekkerkerker and Boland \cite{6} showed that interval graphs are precisely the chordal graphs that do not have asteroidal triples. 
  Interval graphs are chordal, as are the \emph{directed path graphs} and the \emph{path graphs}.
  Similar to Lekkerkerker and Boland, Cameron, Ho\'{a}ng, and L\'{e}v\^{e}que \cite{4} gave a characterization of directed path graphs by a ``special type'' of asteroidal triple, and asked whether or not there was such a characterization for path graphs.
  We give strong evidence that asteroidal triples alone are insufficient to characterize the family of path graphs, and give a new characterization of path graphs via a forbidden induced subgraph family that we call \emph{sun systems}.
  Key to our new characterization is the study of \emph{asteroidal sets} in sun systems, which are a natural generalization of asteroidal triples.
  Our characterization of path graphs by forbidding sun systems also generalizes a characterization of directed path graphs by forbidding odd suns that was given by Chaplick et al.~\cite{9}.
\end{abstract}
\section{Introduction}
\label{sec:intro}

A graph $G$ is \emph{chordal} if every induced cycle in $G$ has at most three vertices.
Gavril \cite{5} proved that a graph $G$ is chordal if and only if $G$ can be represented as the intersection graph of a collection of subtrees of some tree $T$. 
This result suggests the definitions of some subfamilies of chordal graphs: a \emph{path graph} is the intersection graph of a collection of \emph{subpaths} on a tree $T$, 
a \emph{directed path graph} is the intersection graph of a collection of \emph{directed subpaths} on a \emph{directed tree} $T$, and an \emph{interval graph} is the intersection graph of a collection of subpaths on a path $P$.
It follows from the definitions and Gavril's result that we have the following sequence of containments: \[ \mathsf{Interval} \subset \mathsf{DirectPath} \subset \mathsf{Path} \subset \mathsf{Chordal}.\]
These containments are each strict, and minimal graphs exhibiting this are shown in Figure \ref{fig:3-4steps}.

\begin{figure}[htbp]
  \centering
  \resizebox{\linewidth}{!}{
    \begin{tikzpicture}
      \node[circle, draw, fill=black!100, inner sep=1pt, minimum width=4pt] (a) at (-1,0) {};
      \node[circle, draw, fill=black!100, inner sep=1pt, minimum width=4pt] (x) at (-2.5,0) {};
      \node[circle, draw, fill=black!100, inner sep=1pt, minimum width=4pt] (b) at (1,0) {};
      \node[circle, draw, fill=black!100, inner sep=1pt, minimum width=4pt] (z) at (2.5,0) {};
      \node[circle, draw, fill=black!100, inner sep=1pt, minimum width=4pt] (d) at (-1,1.5) {};
      \node[circle, draw, fill=black!100, inner sep=1pt, minimum width=4pt] (c) at (1, 1.5) {};
      \node[circle, draw, fill=black!100, inner sep=1pt, minimum width=4pt] (y) at (0, 3.0) {};
      \node[draw=none] at (0, -0.5) {$G_1$};
      \foreach \from/\to in {x/a, x/d, a/b, a/c, a/d, b/c, b/d, c/d, b/z, c/z, d/y, c/y}
      {\draw (\from) -- (\to);}

      \node[circle, draw, fill=black!100, inner sep=1pt, minimum width=4pt] (c) at (0+6, 0) {};
      \node[circle, draw, fill=black!100, inner sep=1pt, minimum width=4pt] (a) at (-1+6, 1.5) {};
      \node[circle, draw, fill=black!100, inner sep=1pt, minimum width=4pt] (b) at (1+6, 1.5) {};
      \node[circle, draw, fill=black!100, inner sep=1pt, minimum width=4pt] (x) at (0+6, 3) {};
      \node[circle, draw, fill=black!100, inner sep=1pt, minimum width=4pt] (y) at (2+6, 0) {};
      \node[circle, draw, fill=black!100, inner sep=1pt, minimum width=4pt] (z) at (-2+6, 0) {};
      \node[draw=none] at (0+6, -0.5) {$G_2$};
      \foreach \from/\to in {a/b, a/c, b/c, a/z, c/z, b/y, c/y, x/a, x/b}
      {\draw (\from) -- (\to);}

      \node[circle, draw, fill=black!100, inner sep=1pt, minimum width=4pt] (d) at (0+12, 0) {};
      \node[circle, draw, fill=black!100, inner sep=1pt, minimum width=4pt] (e) at (-1+12, 0) {};
      \node[circle, draw, fill=black!100, inner sep=1pt, minimum width=4pt] (z) at (-2.5+12, 0) {};
      \node[circle, draw, fill=black!100, inner sep=1pt, minimum width=4pt] (c) at (1+12, 0) {};
      \node[circle, draw, fill=black!100, inner sep=1pt, minimum width=4pt] (y) at (2.5+12, 0) {};
      \node[circle, draw, fill=black!100, inner sep=1pt, minimum width=4pt] (a) at (-1+12, 1.5) {};
      \node[circle, draw, fill=black!100, inner sep=1pt, minimum width=4pt] (b) at (1+12, 1.5) {};
      \node[circle, draw, fill=black!100, inner sep=1pt, minimum width=4pt] (x) at (0+12, 3.0) {};
      \node[draw=none] at (0+12, -0.5) {$G_3$};
      \foreach \from/\to in {a/b, a/e, a/d, a/c, b/e, b/d, b/c, e/d, d/c, a/z, e/z, b/y, c/y, a/x, b/x}
      {\draw (\from) -- (\to);}
    \end{tikzpicture}
  }
  \caption{$G_1$ is in $\mathsf{DirectPath} \setminus \mathsf{Interval}$, $G_2$ is in $\mathsf{Path} \setminus \mathsf{DirectPath}$, and $G_3$ is in $\mathsf{Chordal} \setminus \mathsf{Path}$}
  \label{fig:3-4steps}
\end{figure}
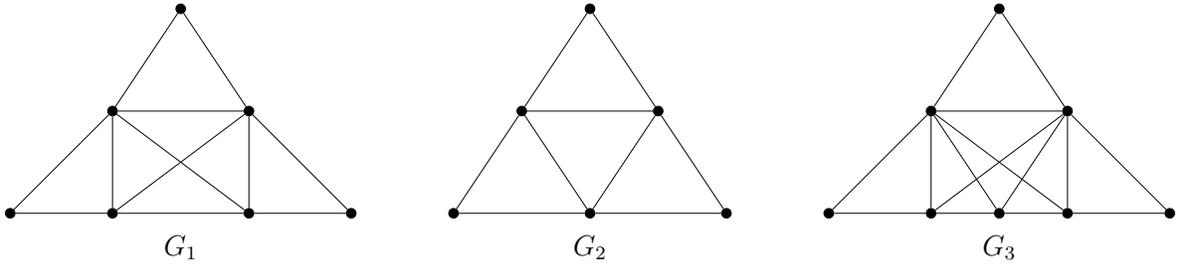

Another way of interpreting the definition of chordal graphs is as a list of \emph{minimal forbidden induced subgraphs}: a graph $G$ is chordal if it does not contain a $C_k$ as an induced subgraph for any $k \geq 4$.
The minimal forbidden induced subgraphs are also known for each of the families defined above: Lekkerker and Boland \cite{6} classified interval graphs in this way, directed path graphs were classified by Panda \cite{2}, and path graphs were classified L\'ev\^eque, Maffray and Preissman \cite{3} (see Figure \ref{fig:FISC} for the minimal forbidden induced subgraphs of the path graphs).

If $G$ is a graph, a set of three distinct and independent vertices $x, y, z$ in $G$ is called an \emph{asteroidal triple} if any pair of vertices $\alpha, \beta \in \set{x, y, z}$ remain connected in $G$ if we delete the third vertex and its neighbourhood.
These triples play an important role in the study of interval graphs: Lekkerkerker and Boland \cite{6} derived the list of minimal forbidden induced subgraphs of interval graphs by first proving the following (now classic) theorem.
\begin{thm}
A chordal graph $G$ is an interval graph if and only if it does not contain an asteroidal triple.
\end{thm}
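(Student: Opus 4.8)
The plan is to prove the two implications separately; the forward direction (interval $\Rightarrow$ no asteroidal triple) is routine, while the converse carries all the weight.

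For the forward direction, I would fix an interval representation $\set{I_v}_{v \in V(G)}$ with $I_v = [l_v, r_v]$ and suppose toward a contradiction that $x, y, z$ is an asteroidal triple. Since the triple is independent the three intervals are pairwise disjoint, and disjoint intervals on the line are totally ordered by position, so I may relabel them as $I_x$ entirely left of $I_y$ entirely left of $I_z$. I then claim that no $x$--$z$ path can avoid $N[y]$: along any path $x = w_0, w_1, \dots, w_m = z$ consecutive intervals overlap, so $\bigcup_i I_{w_i}$ is a connected subset of the line containing points both left of $l_y$ and right of $r_y$, hence containing all of $[l_y, r_y]$. Therefore some $I_{w_i}$ meets $I_y$, giving $w_i \in N(y)$ and contradicting the defining property of the triple. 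This establishes that an interval graph has no asteroidal triple.

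For the converse I would work with clique trees, in keeping with Gavril's representation. Recall that a connected chordal graph $G$ has a clique tree $T$ --- a tree on the maximal cliques of $G$ in which, for each vertex $v$, the cliques containing $v$ induce a subtree $T_v$ --- and that $G$ is an interval graph precisely when some clique tree can be chosen to be a path. I would prove the contrapositive: if $G$ is chordal but not an interval graph, then every clique tree has a branch vertex (a node of degree $\geq 3$), and from such a branch vertex I can extract an asteroidal triple. Concretely, fix a clique tree $T$ minimizing the number of leaves, let $K$ be a branch vertex, and select three branches $B_1, B_2, B_3$ hanging off $K$. In each $B_i$ choose a leaf clique $L_i$ of $T$; a leaf clique always contains a simplicial vertex $x_i$ with $N[x_i] = L_i$, since $L_i$ properly contains the separator with its unique neighbour. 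The $x_1, x_2, x_3$ are pairwise non-adjacent (each lies in a unique, and distinct, maximal clique), and I claim they form an asteroidal triple.

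The heart of the argument is verifying, for each $k$, that $x_i$ and $x_j$ stay connected in $G - N[x_k] = G - L_k$. The idea is to route a path from $x_i$ up through branch $B_i$ into the central clique $K$, across $K$, and down $B_j$ to $x_j$: vertices used strictly inside $B_i$ or $B_j$ have subtrees disjoint from $B_k$ and so cannot lie in $L_k$, and all vertices of $K$ are mutually adjacent, so it suffices to enter and leave $K$ through separator vertices avoiding $L_k$. The main obstacle --- and the technical crux of the whole direction --- is showing that these separator vertices exist, i.e.\ that the separators $K \cap K_i$ and $K \cap K_j$ are not entirely swallowed by $L_k = N[x_k]$, together with the care needed to keep the connecting paths themselves disjoint from $L_k$. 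This is exactly where the leaf-minimality of $T$ (equivalently, a minimal-separator containment argument) is decisive: were a branch separator contained in $L_k$, one could restructure $T$ to reduce its number of leaves, contradicting the choice of $T$. Once this is secured, the routing yields the three required connecting paths, so $\set{x_1, x_2, x_3}$ is an asteroidal triple and the contrapositive is complete.
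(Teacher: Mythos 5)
Your forward direction is correct and is the standard interval-sweep argument. The converse, however, has a genuine gap, and in fact its central construction is false as stated. (A side remark: the paper never proves this theorem at all --- it quotes it as the classical result of Lekkerkerker and Boland \cite{6} --- so your argument has to stand entirely on its own.) You claim that if $T$ is a clique tree with the minimum number of leaves, $K$ is a branch vertex, and $L_1,L_2,L_3$ are leaf cliques in three distinct branches with simplicial vertices $x_1,x_2,x_3$, then $\{x_1,x_2,x_3\}$ is an asteroidal triple, the crux being handled by ``restructuring $T$ to reduce its number of leaves.'' Here is a counterexample. Let $G$ have vertices $s,q,a,b,c,x_1,x_2,x_3$ and maximal cliques $L_1=\{s,x_1\}$, $Q=\{s,q,a\}$, $K=\{s,q,b,c\}$, $L_2=\{b,x_2\}$, $L_3=\{s,c,x_3\}$; so $N(x_1)=\{s\}$, $N(x_2)=\{b\}$, $N(x_3)=\{s,c\}$. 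Then $G$ is chordal (eliminate $x_1,x_2,x_3,a$, leaving the clique $\{s,q,b,c\}$) and $G$ is not interval: each of $q,b,c$ lies in exactly two maximal cliques, forcing $K$ to be adjacent to $Q$, $L_2$ and $L_3$ in every clique tree, so every clique tree has a node of degree at least $3$. The only freedom in building a clique tree is where to attach $L_1$ (to $Q$, $K$, or $L_3$), and the leaf-minimizing trees (three leaves) attach $L_1$ to $Q$ or to $L_3$. If $L_1$ is attached to $Q$, your construction selects $x_1,x_2,x_3$; but $N(x_1)=\{s\}\subseteq N[x_3]$, so $x_1$ is isolated in $G-N[x_3]$ and the triple is not asteroidal. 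If $L_1$ is attached to $L_3$, the leaves are $L_1,Q,L_2$ and your construction selects $x_1,a,x_2$; but $N(x_1)=\{s\}\subseteq N[a]$, so again the triple fails. Thus your construction fails for every leaf-minimizing clique tree of $G$ and every admissible choice of leaf cliques. (Consistent with the theorem, $G$ does contain an asteroidal triple, namely $\{a,x_2,x_3\}$, but it mixes the simplicial vertex of a node that is internal in the first tree with leaf cliques, so your scheme can never select it.)

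The example also isolates exactly where your two supporting claims break. First, the obstruction need not sit at the ``branch separators'' $K\cap K_i$: in the first tree above, $K\cap Q=\{s,q\}$, $K\cap L_3=\{s,c\}$ and $K\cap L_2=\{b\}$ are contained in no leaf clique; what gets swallowed is the deeper separator $L_1\cap Q=\{s\}\subseteq L_3$. What you actually need is that no edge-separator on the entire tree path from $L_i$ to $L_j$ is contained in $L_k$, and your argument only addresses the two separators incident to $K$. Second, leaf-minimality does not rescue this: in the example the bad containment $L_1\cap Q\subseteq L_3$ holds even though three leaves is optimal for $G$, so no restructuring whatsoever can reduce the leaf count. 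Indeed, the natural move --- detach $L_3$ and reattach it across the offending edge --- produces exactly the other leaf-minimal tree (it makes $L_3$ internal but turns its old neighbour $Q$ into a new leaf), and repeating the move cycles back; the leaf count never drops. So the claimed contradiction with the choice of $T$ simply does not materialize, and the converse direction remains unproved; salvaging this route requires a genuinely different way of selecting the three vertices, or an extremal choice stronger than minimizing the number of leaves.
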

Interestingly, both Panda and L\'ev\^eque et al.~\cite{2, 3} list the forbidden subgraphs of directed path and path graphs respectively by direct proofs, leaving the characterization of these classes in terms of asteroidal triples open. 
For directed path graphs, this problem was resolved by Cameron et al.~\cite{4}, who gave a comparable theorem to that of Lekkerkerker and Boland:
\begin{thm}
  A chordal graph $G$ is a directed path graph if and only if it does not contain a special asteroidal triple.
\end{thm}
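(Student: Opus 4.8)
The plan is to route both directions through clique-tree representations, which exist for every chordal graph by Gavril's theorem. The reformulation I would use is standard: $G$ is a directed path graph precisely when it admits a clique tree $T$ together with a rooting of $T$ such that, for every vertex $v$, the set of maximal cliques containing $v$ forms a monotone (root-to-leaf) directed path in $T$. The two bad alternatives are that a vertex's clique-set \emph{branches} (so it is not a path at all, as happens for chordal non-path graphs such as $G_3$ in Figure \ref{fig:3-4steps}) or that it is a path which \emph{turns} at some lowest common ancestor (so $G$ is a path graph but not a directed path graph, as for $G_2$). A special asteroidal triple ought to be exactly the combinatorial witness that no rooting can eliminate all such turns and branches, so each direction amounts to translating between the connectivity conditions defining asteroidal triples and the orientability of clique trees.

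\textbf{Forward direction} (directed path graph $\Rightarrow$ no special asteroidal triple). Fix a representation of $G$ by monotone directed subpaths in a rooted tree $T$, and suppose toward a contradiction that $\{x,y,z\}$ is a special asteroidal triple. I would translate the three avoidance conditions into statements about these subpaths: a path joining two of the vertices while missing the closed neighbourhood of the third must route "around" the third vertex's subpath in $T$. Because a rooting orients every edge consistently away from the root, the relative positions of the three subpaths and their pairwise least common ancestors are tightly constrained; I expect that the three avoidance conditions force two of the connecting paths to agree in direction at a common branch node, contradicting the incompatibility built into the definition of "special." This direction should be short once the right monotonicity invariant is isolated.

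\textbf{Backward direction} (chordal, no special asteroidal triple $\Rightarrow$ directed path graph). This is the substantial part, and I would prove the contrapositive: from a chordal graph that is not a directed path graph, produce a special asteroidal triple. Starting from a clique tree $T$, the reformulation above says that no rooting makes every vertex's clique-set a monotone directed path. I would first extract a local witness — a node of $T$ at which three branches carry clique-paths that cannot be simultaneously oriented, so that every rooting leaves some vertex's clique-set turning or branching there, mirroring the degree-three "star" obstruction at the central triangle of $G_2$. From the three offending branches I would read off three vertices together with the connecting subpaths inside $T$, check that they are independent and pairwise connected while avoiding each other's neighbourhoods (an asteroidal triple), and finally verify that the branching at the common node supplies precisely the extra structure demanded of a \emph{special} triple.

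The main obstacle is this backward direction, and within it the step of producing a clique-tree obstruction that is at once robust to the choice of clique tree and local enough to convert into three concrete vertices. A chordal graph generally has many clique trees, so non-orientability must be ruled out for all of them rather than for a single fixed $T$; the natural way around this is to argue through minimal separators, which are invariant across clique trees, instead of reasoning about one tree. I expect that controlling the interaction between these canonical separators and the branching structure is the crux, and the separator-based theory of clique trees is the tool I would lean on to make the obstruction canonical and hence extractable as a special asteroidal triple.
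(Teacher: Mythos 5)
You should first know that the paper you are being compared against does not prove this statement at all: it is Theorem 2 of the introduction, quoted as a known result of Cameron, Ho\'ang, and L\'ev\^eque \cite{4}, and restated in Section \ref{sec:special} as the characterization by $\mathcal{S}_{\mathsf{directed}}$-asteroidal triples. So there is no in-paper proof to compare with; your proposal has to be judged as an attempt at reproving the cited theorem, and as such it has a genuine gap rather than a complete argument.

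The gap is that you never engage with what ``special'' actually means. In this theorem the word is not a placeholder for ``whatever incompatibility makes the proof work'': it refers to the concrete family $\mathcal{S}_{\mathsf{directed}}$ of four types of pointed connecting subgraphs (Types 1--4 in Figure \ref{fig:specialconnect}), and the entire content of the theorem is that this particular finite list of connection patterns is exactly right. Your forward direction ends by invoking ``the incompatibility built into the definition of special,'' but with the definition left abstract there is nothing to derive a contradiction from; a correct proof must take each of the four types in turn and show that a triple pairwise joined by such connections cannot coexist with a rooted clique tree in which every vertex's clique set is a monotone directed path. Your backward direction has the same problem in a more severe form: after extracting a branching obstruction you say one should ``verify that the branching at the common node supplies precisely the extra structure demanded of a special triple,'' but constructing connections that are isomorphic to one of Types 1--4 (including the unbounded Type 4 family, which is why no fixed finite pattern suffices) is the hard combinatorial core of Cameron et al.'s argument, not a verification step. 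You correctly identify the other difficulty --- that non-orientability must be refuted for all clique trees, not one, and that minimal separators are the natural invariant to use --- but naming the tool is not the same as carrying out the extraction. As it stands the proposal is a plausible plan whose two key steps are exactly the two places where the known proof does its work.
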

A special asteroidal triple is an asteroidal triple where each pair of vertices in the triple must be connected by some special subgraph: see Section \ref{sec:special} for a further discussion of this result.
Notably, a characterization for path graphs via asteroidal triples of this type was explicitly left open by both \cite{4, 3}.

We fill this gap.
For $k \geq 3$, recall that a graph $G(V,E)$ is a $k$-\emph{sun} if $V = C\cup R$ where $C = \{c_i \st i \in [k]\}$ is a clique (the \emph{core} of the sun), $R = \{r_i \st i \in [k]\}$ is an independent set (the \emph{rays} of the sun), and for all $r_i \in R$, $N(r_i) = \{c_i, c_{i+1}\}$ mod $k$. 
 The graph $G_2$ in Figure \ref{fig:3-4steps} is the 3-sun. 
 We define a generalization of the sun graph which we call a \emph{sun system} (Definition \ref{def:sun-system}), and show that a graph is a path graph if and only if it does not contain a ``bad'' sun system in a specific technical sense (Theorem \ref{thm:main}).
 A sun system can be viewed as an \emph{asteroidal set} of vertices of possibly unbounded size which are mutually connected by some specific mediating structure, and in this sense we generalize the result of Cameron et al.~\cite{4}.
 This new characterization also generalizes a result of Chaplick et al.~\cite{9} characterizing directed path graphs as path graphs which do not contain an odd sun as an induced subgraph.
 We then further discuss why characterizing path graphs solely by some ``special'' type of asteroidal triple will fail, as asteroidal triples alone are not enough to provide a certificate that a chordal graph is not a path graph on the underlying tree (Proposition \ref{prop:at_obstruction}). 

The outline of the rest of the paper is as follows.
In Section \ref{sec:defn} we give the necessary preliminaries and state our main theorem.
Section \ref{sec:main} proves our main theorem, and Section \ref{sec:special} discusses how asteroidal triples are insufficient to characterize path graphs.
In Section \ref{sec:conclusion} we discuss open problems suggested by this work.

\section{Definitions}
\label{sec:defn}

If $a, b$ are integers with $a < b$ we let $[a, b] := \set{a, a+1, \ldots, b-1, b}$.
If $A$ and $B$ are sets we write $A \bowtie B$ if both $A \not \subseteq B$ and $B \not \subseteq A$ holds.

\begin{defn}\label{def:clique-tree}
  Let $G$ be a graph.
  A tree $T$ is called a \emph{clique tree} of $G$ if the vertices of $T$ can be labelled with the maximal cliques of $G$ such that the following holds: if $v$ is any vertex in $G$ then the subgraph $T^v$ of $T$ induced by the set of cliques containing $v$ is connected.
  Furthermore, $T$ is a \emph{clique-path tree} if $T$ is a clique tree and for every vertex $v$ of $G$ the subgraph $T^v$ is a path.
\end{defn}

As mentioned in the introduction, Gavril \cite{5} proved that a graph $G$ is chordal if and only if $G$ has a clique tree, where the subtrees of the clique tree give the intersection model of $G$. Similarly, Gavril also showed that a graph $G$ is a path graph if and only if $G$ has a clique-path tree \cite{7}.
If $T$ is a clique tree of a graph $G$ and $S$ is a set of vertices then we use $T[S]$ to denote the subtree of $T$ of minimum size such that its set of vertices contains $S$.
That is, for every clique node $Q$ in $T[S]$, $Q \cap S \neq \emptyset$. 
If $S_1, S_2, \ldots, S_t$ are a collection of subsets of vertices then let \[ T[S_1, S_2, \ldots, S_t] := T[S_1 \cup S_2 \cup \cdots \cup S_t].\]

L{\'{e}}v{\^{e}}que et al.~gave the family of forbidden induced subgraphs for path graphs, which we reproduce in Figure \ref{fig:FISC}.

\begin{figure}[htbp]
  \centering
  \includegraphics[scale=0.4]{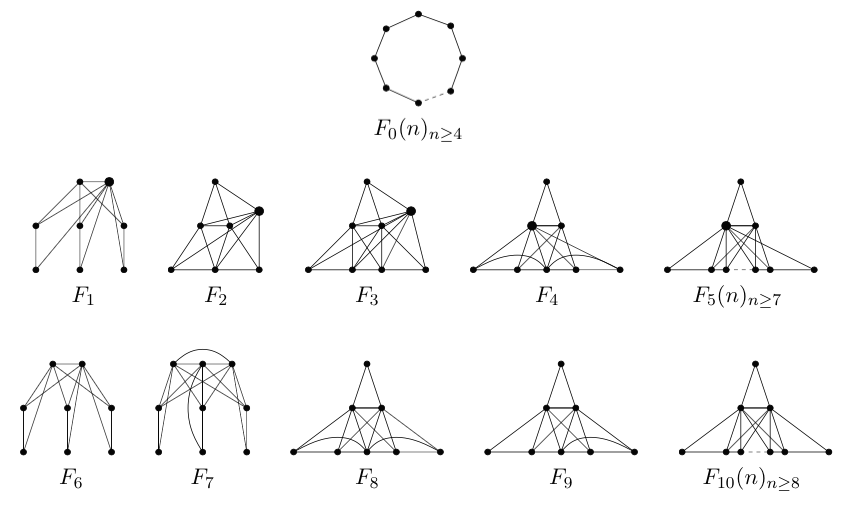}
  \includegraphics[scale=0.4]{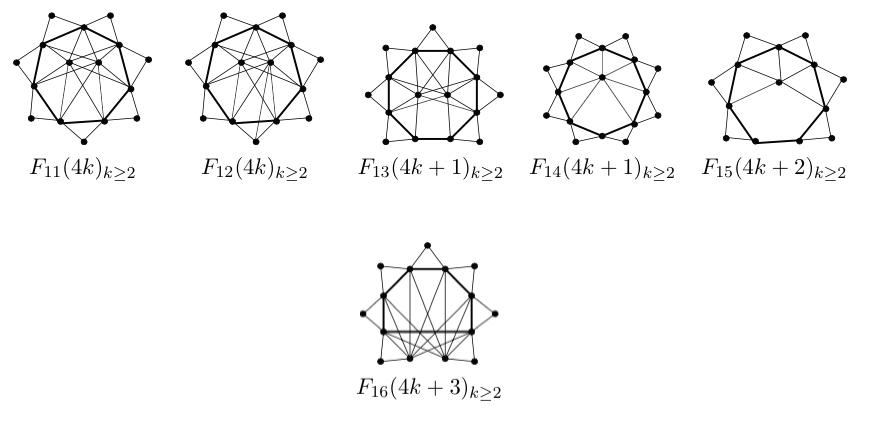}
  \caption{The forbidden induced subgraphs of path graphs \cite{3}. A cycle with bold edges indicates a clique.}
  \label{fig:FISC}
\end{figure}
\begin{defn}\label{def:asteroidal-set}
A set of vertices $S \subseteq V$ is \emph{asteroidal} in $G$ if for every $v \in S$, the set $S \setminus \set{v}$ is in the same connected component after removing $N[v]$ from $G$.
We only consider asteroidal sets with at least three vertices in order to avoid trivial cases, and so if $|S| = 3$ (the minimal case) then $S$ is an \emph{asteroidal triple}.
\end{defn}

The following result is folklore (see \cite{1} for a proof).
\begin{lem}\label{lem:asteroidal-leaves}
  Let $G$ be a chordal graph and let $u, v, w$ be an asteroidal triple in $G$.
  Then for each vertex $\alpha \in \set{u, v, w}$ there is a unique maximal clique $Q_\alpha$ containing $\alpha$ such that in any clique tree $T$ of $G$, $T[u,v,w]$ has exactly three leaves $Q_u, Q_v, Q_w$.
\end{lem}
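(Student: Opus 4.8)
The plan is to fix an arbitrary clique tree $T$ and read off the leaves of the minimal subtree $T[u,v,w]$ directly. First I would record the easy structural facts. Since an asteroidal triple is independent, no maximal clique contains two of $u,v,w$, so the subtrees $T^u, T^v, T^w$ (each connected, by the clique tree property) are pairwise disjoint. By minimality of $T[u,v,w]$, every leaf $L$ must contain a vertex of $\set{u,v,w}$ that appears in no other node of $T[u,v,w]$: otherwise $L$ could be deleted while still covering $\set{u,v,w}$ and keeping the subtree connected, contradicting minimality. As there are only three such vertices, this already gives at most three leaves, and it attaches each leaf to a distinct element of $\set{u,v,w}$.

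The heart of the argument is to show that there are exactly three leaves, i.e.\ that none of the three subtrees lies on the $T$-path between the other two. I would prove this by contradiction, using the separator property of clique trees: for any node $Q$ of $T$, deleting the clique $Q$ from $G$ disconnects $G$ exactly along the components of $T-Q$. Suppose $T^v$ separated $T^u$ from $T^w$ in $T$, and pick a node $Q \in T^v$ on the $T$-path joining them. Because $v \in Q$ we have $Q \subseteq N[v]$, and because $u,w$ are non-adjacent to $v$ we have $u,w \notin Q$; moreover $T^u$ and $T^w$ sit in different components of $T-Q$. The separator property then forces $Q$, and hence $N[v]$, to separate $u$ from $w$ in $G$, contradicting that $\set{u,v,w}$ is asteroidal. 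Applying the same reasoning to $u$ and to $w$ shows the three subtrees are in ``general position,'' so $T[u,v,w]$ is the union of three paths meeting at a single median node, with exactly three leaves $Q_u \in T^u$, $Q_v \in T^v$, $Q_w \in T^w$. Each $Q_\alpha$ is a node of the clique tree, hence a maximal clique, and since $\alpha$ is private to it, $Q_\alpha$ is the unique node of $T[u,v,w]$ containing $\alpha$.

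The delicate step, which I expect to be the main obstacle, is to pin $Q_\alpha$ down canonically, in terms of $G$ and the triple alone. The natural candidate is intrinsic: let $C_\alpha$ be the connected component of $G \setminus N[\alpha]$ containing the other two vertices (well defined by the asteroidal hypothesis) and set $S_\alpha := N(C_\alpha)$. One checks that $S_\alpha \subseteq N(\alpha)$ and that $S_\alpha$ is a minimal $\alpha$--$C_\alpha$ separator, hence a clique since $G$ is chordal; in the clique tree $S_\alpha$ is precisely the separator labelling the edge of $T[u,v,w]$ incident to the leaf $Q_\alpha$, so that $\set{\alpha} \cup S_\alpha \subseteq Q_\alpha$. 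The work that remains is to locate $Q_\alpha$ as the right maximal clique extending $\set{\alpha} \cup S_\alpha$ on the $\alpha$-side of the separator, and to verify that this description is independent of the chosen clique tree. This is exactly where the asteroidal hypothesis has to be used most carefully: the two ``far'' vertices together with their own separators constrain how the component $C_\alpha$ may attach to $T^\alpha$, and controlling this attachment is the crux of showing that $Q_u, Q_v, Q_w$ are well defined.
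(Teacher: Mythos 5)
You cannot really be compared against the paper's own proof here, because the paper does not give one: Lemma \ref{lem:asteroidal-leaves} is stated as folklore and outsourced to \cite{1}. Measured against the standard argument and against what the paper actually uses, your first two paragraphs are correct and complete for the \emph{per-tree} content of the lemma: in any fixed clique tree $T$, the minimal covering subtree $T[u,v,w]$ has exactly three leaves, each containing exactly one vertex of the triple. The leaf-counting step (minimality forces each leaf to own a private vertex of $\{u,v,w\}$, and independence prevents a maximal clique from containing two of them) is sound, and so is the separation step: a node $Q$ with $v \in Q$ lying on the $T$-path between a clique containing $u$ and a clique containing $w$ satisfies $Q \subseteq N[v]$ and $u,w \notin Q$, it puts $T^u$ and $T^w$ in different components of $T - Q$, and the node-separator property of clique trees then makes $N[v]$ separate $u$ from $w$ in $G$, contradicting asteroidality. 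This is exactly the form in which the lemma is invoked later (Lemmas \ref{lem:two-cliques} and \ref{lem:split-rays}), where each ray $r$ is intersecting or split, so $N[r]$ is the unique maximal clique containing $r$ and is therefore forced to be the corresponding leaf.

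The gap is your third paragraph, and it is not merely unfinished: the tree-independent uniqueness you set out to prove is false as literally quantified, so the program of pinning $Q_\alpha$ down canonically from $S_\alpha = N(C_\alpha)$ cannot be completed. Concretely, take the core triangle $\{1,2,3\}$ with rays $y$ (adjacent to $2,3$) and $z$ (adjacent to $1,3$), and replace the third ray by vertices $u,s,a,b$ with edges $us, ua, ub, sa, sb, s1, s2$ (and $a$ nonadjacent to $b$). This graph is chordal, $\{u,y,z\}$ is an asteroidal triple, $S_u = \{s\}$, and $\{u\} \cup S_u$ extends to two maximal cliques $\{u,s,a\}$ and $\{u,s,b\}$. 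Both the path $\{u,s,a\}$--$\{u,s,b\}$--$\{s,1,2\}$--$\{1,2,3\}$ with $\{y,2,3\}$ and $\{z,1,3\}$ pendant at $\{1,2,3\}$, and the same tree with $\{u,s,a\}$ and $\{u,s,b\}$ swapped, are clique trees; in the first, the leaf of $T[u,y,z]$ containing $u$ is $\{u,s,b\}$, while in the second it is $\{u,s,a\}$. So the leaf cliques are canonical only per clique tree (or when $\alpha$ is simplicial, as in the paper's applications). The correct reading of the folklore statement reverses the quantifiers --- for every clique tree there are exactly three, uniquely determined, leaves, one per vertex of the triple --- and that is precisely what your first two paragraphs already prove; the agenda of the third paragraph should be dropped rather than carried out.
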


An immediate consequence of this lemma is the next proposition, which we will need later.
\begin{prop}\label{prop:path-graph-adjacency}
  Let $G$ be a chordal graph and let $u, v, w$ be an asteroidal triple in $G$.
  If there is a vertex $x$ such that $u, v, w \in N(x)$ then $G$ is not a path graph.
\end{prop}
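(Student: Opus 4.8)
The plan is to argue by contradiction. Suppose $G$ is a path graph; then it has a clique-path tree $T$, so every subtree $T^y$ is in fact a \emph{path}. Recall also that in any clique tree two distinct vertices $a,b$ of $G$ are adjacent precisely when $T^a$ and $T^b$ share a clique node. The whole argument will consist of showing that the subtree $T^x$ of the vertex $x$ is forced to contain a node of degree at least three, contradicting that $T^x$ is a path.

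First I would translate the asteroidal hypothesis into the language of $T$. Since $u,v,w$ are independent, the subtrees $T^u, T^v, T^w$ are pairwise disjoint. I claim the $T$-path joining $T^u$ and $T^v$ avoids $T^w$, and symmetrically for the other two pairs. Indeed, take a $u$--$v$ path in $G - N[w]$ supplied by asteroidality; the subtrees of its vertices are each disjoint from $T^w$ and overlap consecutively, so their union is a connected subtree of $T - T^w$ containing both $T^u$ and $T^v$. Hence $T^u,T^v$ lie in a single component of $T - T^w$, i.e.\ the path between them misses $T^w$. This step uses only chordality and asteroidality, not the path-graph assumption; it is the tripod structure underlying Lemma~\ref{lem:asteroidal-leaves}.

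Next I would locate $x$. Because $x$ is adjacent to each of $u,v,w$, the subtree $T^x$ meets each of $T^u,T^v,T^w$, so I can pick nodes $p_u \in T^x \cap T^u$, $p_v \in T^x \cap T^v$, $p_w \in T^x \cap T^w$. The avoidance conditions from the previous step put these in ``general position'': for instance $\mathrm{path}(p_u,p_v)$ stays inside the component of $T - T^w$ containing $p_u,p_v$ and so misses $p_w$, and symmetrically for the other pairs. Consequently the median node $m$ of $\set{p_u,p_v,p_w}$ in $T$ is distinct from all three, and the three paths from $m$ to $p_u,p_v,p_w$ leave $m$ along three distinct edges. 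Since $T^x$ is a connected subtree containing $p_u,p_v,p_w$, it contains each of these paths, hence contains $m$ together with three incident edges, giving $\deg_{T^x}(m) \ge 3$. This contradicts $T^x$ being a path, so $G$ is not a path graph.

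I expect the only delicate point to be the general-position step, namely ruling out that $m$ degenerates to a collinear configuration (one $p_\alpha$ lying on the path between the other two). This is exactly what the pairwise avoidance conditions buy, and it is where asteroidality---rather than the mere existence of the three subtrees---is essential. A pleasant feature is that we never need $T^u,T^v,T^w$ to be confined to separate ``arms'' of the tripod, which is what makes the freedom in choosing each $p_\alpha$ inside $T^x$ harmless; everything else is routine translation between $G$ and its clique tree.
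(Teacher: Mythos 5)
Your proof is correct and follows essentially the route the paper intends: the Proposition is presented there as an immediate consequence of Lemma \ref{lem:asteroidal-leaves}, whose content (the tripod structure of $T^u, T^v, T^w$ in any clique tree) is exactly what your avoidance step re-derives before you force a node of degree at least $3$ in $T^x$. The only difference is that you prove this folklore structure from scratch rather than citing the lemma, which makes your argument self-contained but not a genuinely different approach.
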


 Recall that a collection $\mathcal{S}$ of sets $S_i$ satisfies the \emph{Helly property} if for every subset $T \subseteq S$ and every pair $S_i,S_j \in T$, $S_i\cap S_j \neq \emptyset$ implies $\bigcap\limits_{S_i\in T} S_i \neq \emptyset$.
 Our next goal is to define \emph{sun systems}, which is used in our characterization of path graphs. 
 To do so, we first introduce a \emph{flower}.
 \begin{defn}
 A \emph{flower}
 $\mathcal{F} = \{P_1, \ldots, P_t\}$ is a collection of cliques that satisfies the Helly property such that $\bigcap\limits_{P_i \in \mathcal{F}} P_i \neq \emptyset$. We refer to $\bigcap\limits_{P_i \in \mathcal{F}} P_i = C$ as the \emph{core} of the flower, and the clique $P_i \setminus C$ as a \emph{petal} of the flower $\mathcal{F}$\footnote{Note the connection between a flower and a sunflower (made famous by Erd\H{o}s and Rado \cite{10}). 
To contrast, a sunflower is a flower where all the petals are not necessarily cliques but must satisfy the stronger condition \[(P \setminus C) \cap (P' \setminus C) = \emptyset \] instead of $P \bowtie P'$.}. 
 \end{defn}
 \begin{defn}\label{def:sun-system}
   A graph $G$ is a \emph{sun system} if the vertices of $G$ can be partitioned as $V = F \cup R$ such that $R$ is asteroidal and the induced subgraph of $G$ on $F$ is a flower ${\cal F}$; $G$ is \emph{non-trivial} if $|{\cal F}| > 1$.
   The asteroidal vertices in $R$ are called \emph{rays}.


 \end{defn}

A sun system is just a $k$-sun where the central clique has been replaced with a flower of cliques and where the ``rays'' no longer need to be connected cyclically to the core (cf. Figure \ref{fig:Hsun}).
Suppose that $G$ is a sun system on a flower $\mathcal{F}$. Then since the set of rays $R$ is asteroidal, it follows that for all $r \in R$ we have $N(r) \subseteq \mathcal{F}$, and moreover $N(r) \bowtie N(r')$ for all $r, r' \in R$. 
In particular, no ray is in the neighbourhood of any other ray.

\begin{figure}[htbp]
  \centering
  \begin{minipage}{0.45\linewidth}
  \centering
  \includegraphics[scale=0.2]{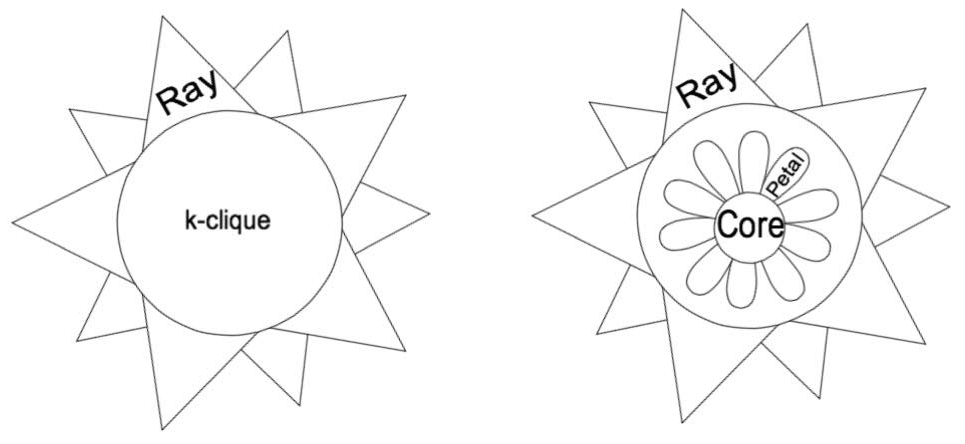}
  \end{minipage}
  \caption{A sun and a sun system.}
  \label{fig:Hsun}
\end{figure}


Figure \ref{fig:Hsun} illustrates a ``generic'' sun system, while the graph in Figure \ref{fig:clique-example} is a concrete example of a non-trivial sun system. In particular, the flower of Figure \ref{fig:clique-example} is formed by two maximal cliques (indicated by vertices $x$ and $y$ and the core outlined by the clique $(u,v,w)$), and three rays which are each adjacent to vertices in the core.

\begin{figure}[htbp]
  \centering
  \begin{tikzpicture}
    \node[circle, draw, fill=black!100, inner sep=1pt, minimum width=4pt] (x) at (-1.5, 1.5) {};
    \node[circle, draw, fill=black!100, inner sep=1pt, minimum width=4pt] (a) at (-1,-0.5) {};
    \node[circle, draw, fill=black!100, inner sep=1pt, minimum width=4pt] (b) at (0,1) {};
    \node[circle, draw, fill=black!100, inner sep=1pt, minimum width=4pt] (z) at (0,-2.12) {};
    \node[circle, draw, fill=black!100, inner sep=1pt, minimum width=4pt] (c) at (1,-0.5) {};
    \node[circle, draw, fill=black!100, inner sep=1pt, minimum width=4pt] (y) at (1.5, 1.5) {};
    \node[circle, draw, fill=black!100, inner sep=1pt, minimum width=4pt] (g) at (-3, 0.25) {};
    \node[circle, draw, fill=black!100, inner sep=1pt, minimum width=4pt] (h) at (3, 0.25) {};
    \foreach \from/\to in {x/a, x/b, b/a, z/a, z/c, a/c, y/b, y/c, b/c, a/g, b/g, c/g, a/h, b/h, c/h}
    {\draw (\from) -- (\to);}

    \pgfmathsetmacro{\xoffset}{8}
    \node[circle, draw, fill=black!100, inner sep=1pt, minimum width=4pt] (x1) at (-1.5 + \xoffset, 1.5) {};
    \node[circle, draw, inner sep=1pt, minimum width=4pt] (a1) at (-1 + \xoffset,-0.5) {u};
    \node[circle, draw,  inner sep=1pt, minimum width=4pt] (b1) at (0 + \xoffset,1) {v};
    \node[circle, draw, fill=black!100, inner sep=1pt, minimum width=4pt] (z1) at (0 + \xoffset,-2.12) {};
    \node[circle, draw, inner sep=1pt, minimum width=4pt] (c1) at (1 + \xoffset,-0.5) {w};
    \node[circle, draw, fill=black!100, inner sep=1pt, minimum width=4pt] (y1) at (1.5 + \xoffset, 1.5) {};
    \node[circle, draw, inner sep=0pt, minimum width=5pt] (g1) at (-3 + \xoffset, 0.25) {x};
    \node[circle, draw,  inner sep=0pt, minimum width=5pt] (h1) at (3 + \xoffset, 0.25) {y};

    \foreach \from/\to in {a1/b1, a1/c1, b1/c1}
    {\draw[very thick, red] (\from) --(\to);}
    \foreach \from/\to in {x1/a1, x1/b1, z1/a1, z1/c1, y1/b1, y1/c1, a1/g1, b1/g1, c1/g1, a1/h1, b1/h1, c1/h1}
    {\draw (\from) -- (\to);}
  \end{tikzpicture}
  \caption{The graph $F_{11}(8)$, a sun system from Figure 2, with its flower outlined on the right.}
  \label{fig:clique-example}
\end{figure}
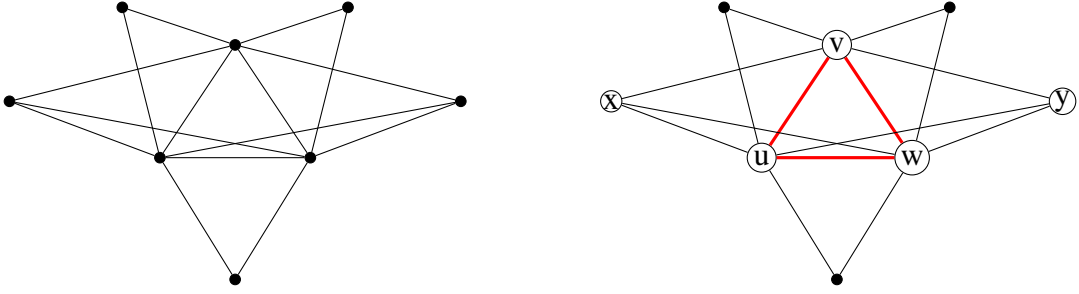

 In a sun system the rays can be adjacent to vertices in the core or vertices in the petals of the flower. 
 For our characterization we will only need to consider rays with neighbourhoods of a certain type.

\begin{defn}\label{def:ray-types}
  Let $G$ be a sun system on a flower $\mathcal{F} = \{P_1, \ldots, P_t\}$ with core $C$.
  A ray $r$ of $G$ is called \emph{intersecting} if $N(r) \subset C$, and \emph{split on $P_r$} if there is a \emph{unique} petal $P_r$ in the underlying flower $\mathcal{F}$ such that $N(r) \subseteq P_r$ and both $N(r) \cap C \neq \emptyset$, $N(r) \cap (P_r\setminus C) \neq \emptyset$.
  We say that $P \in \mathcal{F}$ is \emph{split} if at least one ray is split on $P$.
\end{defn}

For example, in the sun system outlined in Figure \ref{fig:clique-example}, all of the ray vertices are intersecting.
These types of rays are useful for two reasons; first, if $r$ is an intersecting or split ray then $N[r]$ is a maximal clique in $G$ (and so $N[r]$ will appear as a vertex in the clique tree of $G$); second, if $r$ is a split ray on a petal $P$ then it can be shown that the maximal cliques $N[r]$ and $P$ must be adjacent in any clique tree of the sun system.
Therefore, from here on in the paper we make the following assumption:
\begin{center}
  All ray vertices in any sun system are intersecting or split.
\end{center}
We may make this assumption since a graph $G$ will be a path graph if and only if it does not contain a ``bad'' induced sun system, and so other ray vertices can be safely ignored.

There are other useful consequences of the previous definition.
In particular, if two ray vertices $r, r'$ satisfy $v \in N(r) \cap N(r')$ for some $v$ then the corresponding maximal cliques $N[r], N[r']$ must be connected in the subtree $T^v$ of any clique tree $T$ for $G$.
In our characterization, sequences of ray vertices $r_1, r_2, \ldots, r_t$ that satisfy $N(r_i) \cap N(r_{i+1}) \not = \emptyset$ play an important role for this very reason.
It will be convenient to introduce a graph which captures these sequences of ``neighbourhood-adjacent'' ray vertices.

\begin{defn}\label{def:auxiliary-graph}
  Let $G$ be a sun system with flower $\mathcal{F}$ and rays $R$.
  Let $P_1, P_2, \ldots, P_t$ and $C$ denote the petals and the core respectively of the flower $\mathcal{F}$.
  We define a new graph $A^G$, called the \emph{auxiliary graph of $G$}, as follows.

  For each ray $r \in R$, we introduce a new \emph{ray vertex} $\overline r \in A^G$, and for each \emph{split} petal $P_i$, we introduce a \emph{petal vertex} $\overline p_i$.
  The edges of $A^G$ are as follows:
  \begin{itemize}
  \item Add an edge between each pair of petal vertices $\overline p, \overline p'$.
  \item For each pair of ray vertices $\overline r, \overline r'$ add an edge if $N(r) \cap N(r') \cap C \not = \emptyset$.
  \item For each pair $\overline r, \overline p$, add an edge if the ray $r$ is split on the corresponding petal $P$.
  \end{itemize}
\end{defn}

Now we state our main theorem.

\newtheorem*{thm:main}{{\bf Theorem \ref{thm:main}}}
\begin{thm}\label{thm:main}
  A chordal graph $G$ is a path graph if and only if it does not contain an induced, non-trivial sun system $G$ where $A^G$ is non-bipartite.
\end{thm}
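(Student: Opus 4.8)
The statement is an equivalence, so I would prove the two implications separately, and I expect the ``only if'' direction (a bad sun system blocks being a path graph) to be routine relative to the ``if'' direction (every non--path graph exposes a bad sun system), which I flag below as the main obstacle. For the forward direction I would first use that the class of path graphs is closed under induced subgraphs: if $G$ is a path graph then so is every induced sun system inside it, so it suffices to prove that a non-trivial sun system $H$ with $A^{H}$ non-bipartite is itself \emph{not} a path graph. By Gavril's clique-path-tree characterization \cite{7}, this amounts to showing $H$ has no clique-path tree. Throughout I would lean on the three structural facts recorded before the theorem: for an intersecting or split ray $r$ the set $N[r]$ is a maximal clique (a node of every clique tree); if $r$ is split on $P$ then $N[r]$ and $P$ are adjacent in every clique tree; and if $v\in N(r)\cap N(r')$ then $N[r]$ and $N[r']$ lie together in the path $T^{v}$.

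The plan for the forward direction is to extract a proper $2$-colouring of $A^{H}$ from any hypothetical clique-path tree $T$, contradicting non-bipartiteness. First I would locate a \emph{backbone}: since every petal contains the non-empty core $C$, the nodes containing all of $C$ form a subtree $T^{C}=\bigcap_{c\in C}T^{c}$, which is a sub-path of each path $T^{c}$ and hence itself a path carrying every petal; let $Q_{L},Q_{R}$ be its two extreme nodes. The key lemma to establish is that each \emph{active} ray node $N[r]$ (one sharing a core vertex with another ray, or split) must be attached directly to $Q_{L}$ or to $Q_{R}$: if $N[r]$ met the backbone at an interior node $Q$, then for $c\in N(r)\cap C$ the node $Q$ would have its two backbone-neighbours together with the branch toward $N[r]$, giving degree three in the path $T^{c}$. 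Here asteroidality of the rays is essential, since (generalizing Lemma \ref{lem:asteroidal-leaves}) it forces the ray nodes to sit as leaves of the spanning subtree rather than chaining through one another. This assigns every active ray a \emph{side} in $\{L,R\}$, and the same degree-three argument applied to $T^{v}$ shows that two rays sharing a core vertex must take opposite extremes, and that a ray split on a petal pins that petal to an extreme; in particular three or more split petals are impossible, matching the triangle they span in $A^{H}$. Colouring each ray by its side and each split petal by the side opposite to its extreme then makes every edge of $A^{H}$ bichromatic, so $A^{H}$ is bipartite --- the desired contradiction.

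For the backward direction --- which I expect to be the hard part --- I would argue the contrapositive: a chordal $G$ that is not a path graph contains an induced non-trivial sun system with $A^{G}$ non-bipartite. Starting from a clique tree $T$ of $G$ chosen to be as path-like as possible, failure to be a path graph yields a vertex $x$ and a node $Q\ni x$ at which $T^{x}$ branches into at least three directions that cannot be eliminated by local rerouting. The plan is to read the sun system off this irremovable branch: take the core $C$ to be a minimal separator realizing the branch, let the petals be maximal cliques occupying distinct branch-directions, and let the rays be vertices certifying (via Proposition \ref{prop:path-graph-adjacency} and the leaf structure of Lemma \ref{lem:asteroidal-leaves}) that the branches cannot be linearised, i.e.\ an asteroidal set $R$ attached to the flower. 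The crucial and most delicate step is to show that $A^{G}$ on this system is non-bipartite: this should be precisely the statement that the branching at $Q$ survives every rerouting, so that the side-constraints from the forward direction admit no consistent solution, which by the odd-cycle characterization produces an odd cycle in $A^{G}$.

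The main obstacle is therefore making the passage ``irremovable branch $\Rightarrow$ odd cycle in $A^{G}$'' precise, and simultaneously guaranteeing that the vertices extracted are genuinely intersecting or split rays and that $R$ is genuinely asteroidal; a secondary difficulty, on the forward side, is the careful bookkeeping of clique-tree \emph{connectivity} (not merely path-ness) needed to force each active ray node adjacent to an extreme petal, since a core vertex's only non-petal cliques are ray nodes. I would expect the cleanest route to be an extremal/minimal-counterexample choice of clique tree that isolates a single branch, so that the local analysis around $Q$ matches the global $2$-colouring picture developed for the forward direction.
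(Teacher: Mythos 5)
Your ``only if'' half is, in substance, the paper's own proof: your backbone $T^{C}$ together with the claim that active ray nodes can attach only at its two extreme nodes is exactly Lemma \ref{lem:two-cliques}, your pinning of split petals to the extremes is Lemma \ref{lem:split-rays}, and your proper $2$-colouring is just the contrapositive packaging of the paper's parity argument along an odd cycle (Lemmas \ref{lem:cycle-rays} and \ref{lem:cycle-both}). Modulo the bookkeeping you yourself flag (in a sun system every maximal clique containing no ray is an intersection of petals and hence contains the whole core), that half is sound.

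The genuine gap is the ``if'' direction. Your plan --- choose a maximally path-like clique tree of an arbitrary chordal non-path graph, locate an ``irremovable branch,'' and read off a core, petals, and asteroidal rays whose auxiliary graph is non-bipartite --- is not an argument but a restatement of the problem: you give no reason why the vertices you extract form an asteroidal set of intersecting or split rays, why the minimal separator you pick is the core of a flower (Helly property included), or why survival of the branch under all reroutings yields an odd cycle in $A^{G}$ rather than some unstructured inconsistency. The paper never faces any of this because it takes a completely different and much cheaper route: by the forbidden-induced-subgraph characterization of L\'ev\^eque, Maffray and Preissmann \cite{3}, a chordal graph that is not a path graph contains one of the graphs of Figure \ref{fig:FISC} as an induced subgraph, and the proof simply verifies, family by family, that each of these graphs contains a non-trivial sun system with an odd cycle in its auxiliary graph (on ray vertices for $F_1$--$F_5$ and $F_{11}$, $F_{12}$; on petal vertices for $F_6$, $F_7$; mixed for $F_{13}$--$F_{16}$). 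This reduces the direction you expected to be hard to a finite check against a known theorem. Indeed, a completed version of your structural extraction would amount to a sun-system argument independent of \cite{3}, which is essentially the open problem the paper raises in its conclusion --- a measure of how much unproved work is hiding inside your ``main obstacle.''
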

 A ''bad" sun system is therefore a sun system $G$ whose auxiliary graph $A^G$ is non-bipartite.

\section{Main Result}
\label{sec:main}
 The proof of Theorem \ref{thm:main} is by contradiction.
So, assume that we have a chordal graph which contains a non-trivial, induced sun-system $G$ such that $A^G$ is not bipartite.
Let $O$ be an odd cycle in $A^G$, and we split into three cases depending on what types of vertices lie on $O$.
Before we begin in earnest, we prove two preliminary lemmas that will prove useful later.

\begin{lem}\label{lem:two-cliques}
  Let $G$ be a non-trivial sun system and let $R' \subseteq R$ be a set of rays in $G$ such that all rays in $R'$ are split or intersecting.
  Let $C$ denote the core of the underlying flower of $G$.
  Suppose that $G$ is a path graph.
  Then there exist two maximal cliques $Q_1, Q_2$ in $G$ such that $C \subseteq Q_1 \cap Q_2$, and in any clique-path tree of $G$ and for every $r \in R'$ the maximal clique $N[r]$ is adjacent to either $Q_1$ or $Q_2$.
\end{lem}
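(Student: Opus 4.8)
The plan is to fix an arbitrary clique-path tree $T$ of $G$ and to produce the two cliques as the endpoints of the ``core path'' of $T$. Since $C$ is a clique (it is contained in the clique $P_1$), some maximal clique of $G$ contains $C$, so the set $\pi := \bigcap_{c \in C} T^c$ of maximal cliques containing all of $C$ is nonempty. As an intersection of the paths $T^c$ (each a path because $T$ is a clique-path tree), all of which share a common node, $\pi$ is itself a subpath of $T$; let $Q_1, Q_2$ be its two endpoints. By construction $C \subseteq Q_1 \cap Q_2$, which gives the first assertion. It remains to show that for every $r \in R'$ the clique $N[r]$ is adjacent to $Q_1$ or to $Q_2$.

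First I would show that $N[r]$ attaches to $\pi$ only at an endpoint. Since $r$ is intersecting or split, pick $c_r \in N(r) \cap C$; then $N[r]$ and every node of $\pi$ lie on the path $T^{c_r}$. If the node $m_r \in \pi$ nearest $N[r]$ were an internal node $A_j$ of $\pi$, then in the path $T^{c_r}$ the node $A_j$ would have three distinct neighbours --- its two neighbours $A_{j-1}, A_{j+1}$ along $\pi$ and the first node on the way to $N[r]$ (which also contains $c_r$) --- contradicting that $T^{c_r}$ is a path. Hence $m_r \in \{Q_1, Q_2\}$. The asteroidal hypothesis also rules out the degenerate possibility that $N[r]$ itself is an interior node of $\pi$: this would require $C \subseteq N(r)$, and then deleting $N[r] \supseteq C$ would isolate every intersecting ray, so $R$ could not be asteroidal.

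The heart of the argument is to upgrade ``attaches at $Q_1$'' to ``is adjacent to $Q_1$''. Suppose $r$ is intersecting (so $N(r) \subseteq C$) and, for contradiction, that the path from $N[r]$ to $m_r = Q_1$ has a first intermediate node $Z$. Because this path lies in $T^v$ for every $v \in N(r)$ (each such $v$ lies in both $N[r]$ and $Q_1 \supseteq C$), we get $N(r) \subseteq Z$, and since the maximal clique $Z \neq N[r]$ cannot be a proper subset of the clique $N[r]$ there is a vertex $w \in Z \setminus N[r]$. If $w$ were a ray then, being in the clique $Z \supseteq N(r)$ and not itself in $C \supseteq N(r)$, it would satisfy $N(r) \subseteq N(w)$, contradicting $N(r) \bowtie N(w)$; hence $Z \subseteq F$. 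As $Z \notin \pi$ (it is strictly nearer $N[r]$ than $m_r$) there is a core vertex $c' \in C \setminus Z$. But every vertex of $C$ is adjacent to every other vertex of $F$ (any $x \in F$ lies in some petal $P_i \supseteq C \ni c'$, so $x \sim c'$), so $c'$ is adjacent to all of $Z$ and $Z \cup \{c'\}$ is a clique --- contradicting the maximality of $Z$. Thus no intermediate node exists and $N[r]$ is adjacent to $Q_1$. For a split ray $r$ I would instead invoke the fact, recorded before the statement, that $N[r]$ is adjacent to its petal clique $P_r$; since $P_r \supseteq C$ lies in $\pi$ and $N[r]$ attaches to $\pi$ at an endpoint, that endpoint must be $P_r$, so again $N[r]$ is adjacent to $Q_1$ or $Q_2$.

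The main obstacle I anticipate is not the per-tree analysis above but the \emph{uniformity} of $Q_1, Q_2$: the statement asks for two cliques that work simultaneously in \emph{every} clique-path tree, whereas the endpoints of $\pi$ are defined relative to the chosen $T$. To close this I would characterise $Q_1, Q_2$ intrinsically --- as the (at most two) maximal cliques containing $C$ that are adjacent to some $N[r]$ with $r \in R'$ --- and argue, from the adjacency just established together with the fact that split rays pin their petal cliques to the ends, that this set has size at most two and is independent of $T$. Verifying this intrinsic description (in particular that no third core clique can acquire a ray neighbour in some other clique-path tree) is the delicate point, and is where I expect the asteroidal property of $R$ to be used most heavily.
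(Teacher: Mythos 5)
Your construction is genuinely different from the paper's proof (which argues by contradiction rather than constructively), and for rays with $C \not\subseteq N(r)$ your per-tree analysis is sound; but the degenerate case you try to wave away is a real gap. Your argument that $N[r]$ cannot lie inside $\pi$ --- ``deleting $N[r] \supseteq C$ would isolate every intersecting ray'' --- is vacuous: if $C \subseteq N(r)$ then $G$ has \emph{no} intersecting rays at all, since an intersecting ray $r'$ would satisfy $N(r') \subsetneq C \subseteq N(r)$, contradicting $N(r') \bowtie N(r)$. Moreover the case genuinely occurs under all hypotheses of the lemma. Take $C = \{c_1,c_2\}$, petals $P_1 = \{c_1,c_2,a,a'\}$ and $P_2 = \{c_1,c_2,b,b'\}$, and rays with $N(r_1) = \{c_1,c_2,a\}$, $N(r_2) = \{c_1,b\}$, $N(r_3) = \{c_2,b'\}$. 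All three rays are split, the neighbourhoods are pairwise incomparable, and $R$ is asteroidal (after deleting $N[r_1]$ the rays $r_2,r_3$ remain connected through $b,b'$; after deleting $N[r_2]$ or $N[r_3]$ the other two rays remain connected through $c_2$ or $c_1$). The graph is a path graph: the tree on the five maximal cliques $P_1, P_2, N[r_1], N[r_2], N[r_3]$ with edges $N[r_1]P_1$, $P_1P_2$, $P_2N[r_2]$, $P_2N[r_3]$ is a clique-path tree. In this tree $\pi$ is the path $N[r_1], P_1, P_2$, so your endpoints are $Q_1 = N[r_1]$ and $Q_2 = P_2$ --- and $N[r_1]$ is adjacent to \emph{neither} of them (its unique tree neighbour is the interior node $P_1$). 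So in this case not only is your justification broken, your choice of $Q_1, Q_2$ is simply wrong; the lemma survives here only with the pair $(P_1, P_2)$. Any repair must handle ray cliques that lie inside $\pi$ rather than ruling them out.

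For contrast, the paper never needs your ``adjacent to an endpoint of $\pi$'' claim: it supposes no two cliques suffice, extracts three rays whose cliques $N[r_1], N[r_2], N[r_3]$ attach to three distinct cliques $Q_1, Q_2, Q_3$, uses Lemma \ref{lem:asteroidal-leaves} to argue that the $N[r_i]$ are leaves (hence cannot be adjacent to any clique containing a ray, so each $Q_i$ contains $C$), places the three $Q_i$ on a common path of $T$, and derives that $T^v$ is not a path for $v \in N(r_2) \cap C$ when $Q_2$ is the middle clique. (In fairness, that argument tacitly assumes $N[r_2]$ does not itself lie on that path, which again requires $C \not\subseteq N(r_2)$, so the paper is also silent on exactly this configuration; your observation that every flower vertex is adjacent to every core vertex is in fact a cleaner route to its claim that ray-free maximal cliques contain $C$.) Finally, on the uniformity over all clique-path trees that you flag as the delicate unfinished point: you are right that your $Q_1, Q_2$ are tree-dependent, but the paper's proof does not establish uniformity either, and uniformity is never used --- every later application of Lemma \ref{lem:two-cliques} first fixes a single clique-path tree. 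So a per-tree statement would have sufficed; as written, though, your argument proves neither version, because of the degenerate case above and the admitted incompleteness of the last step.
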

\begin{proof}
  Suppose not by way of contradiction, and let $T$ be any clique-path tree of $G$.
  If $Q$ is a maximal clique in $G$ that does \emph{not} contain a ray vertex $r$, then since $G$ is a sun-system there must be a collection of cliques $\set{P_i}_{i \in I} \subseteq \mathcal{F}$ for some index set $I$ such that \[Q = \bigcap_{i \in I} P_i.\]
  It follows that $C \subseteq Q$.
  
  Now, since each ray $r \in R'$ is split or intersecting, the set $N[r]$ is a maximal clique in $G$.
  So suppose that $r_1, r_2, r_3$ are three ray vertices such that $N[r_1], N[r_2], N[r_3]$ are adjacent to three distinct maximal cliques $Q_1, Q_2, Q_3$, respectively, in $T$.
  By Lemma \ref{lem:asteroidal-leaves} these three cliques $N[r_1], N[r_2], N[r_3]$ are leaves in $T[r_1, r_2, r_3]$, and so they can not be adjacent to each other (or to any other maximal clique in $G$ containing a ray vertex).
  Thus each of the cliques $Q_1, Q_2, Q_3$ must contain the core $C$ of the underlying flower $\mathcal{F}$ of $G$.
  
  Let $\mathcal{P}$ denote the path in $T$ connecting the three nodes $Q_1, Q_2, Q_3$ --- note that it must be a path, since $C \subseteq Q_1 \cap Q_2 \cap Q_3$ and $T$ is a clique-path tree --- and assume by symmetry that $Q_2$ is not one of the endpoints of $\mathcal{P}$.
  Towards contradiction, choose any vertex $v \in N[r_2] \cap C$: it follows that $v \in Q_1 \cap Q_2 \cap Q_3 \cap N[r_2]$, and so the tree $T^v$ is not a path.
\end{proof}

\begin{lem}\label{lem:split-rays}
  Let $G$ be a non-trivial sun system with underlying flower $\mathcal{F}$, let $P$ be a petal in $\mathcal{F}$, and let $r$ be a ray of $G$ that is split on $P$.
  Then in any clique tree $T$ of $G$ the clique $N[r]$ is adjacent to $P$.
\end{lem}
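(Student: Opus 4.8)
The plan is to fix an arbitrary clique tree $T$ of $G$ and show that the path in $T$ joining the nodes $N[r]$ and $P$ has length one. Both are genuine nodes of $T$: since $N(r)\subseteq P$ is a clique, $r$ is simplicial and therefore lies in the unique maximal clique $N[r]$, which is maximal because $r$ is split; and $P$, being a petal of the underlying flower, is a maximal clique of $G$. They are distinct since $r\in N[r]$ but $r\notin P$. The first routine step is the identity $N[r]\cap P=N(r)$, which holds because $N[r]=\{r\}\cup N(r)$, $r\notin P$, and $N(r)\subseteq P$ by the definition of a ray split on $P$. Writing the path in $T$ as $N[r]=X_0,X_1,\dots,X_m=P$, I would then invoke the connectivity of $T^v$ for each $v\in N(r)$ (part of the clique-tree axiom in Definition \ref{def:clique-tree}): since both endpoints contain $v$, the entire path lies in $T^v$, so $N(r)\subseteq X_i$ for every $i$, and in particular $N(r)\subseteq X_1$.

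With this in hand the lemma reduces to a single claim: \emph{the only maximal cliques of $G$ that contain $N(r)$ are $N[r]$ and $P$}. Granting the claim, $X_1$ is a maximal clique containing $N(r)$ and different from $X_0=N[r]$, so $X_1=P$; since the nodes of a tree path are distinct this is compatible with $X_m=P$ only when $m=1$, i.e.\ $N[r]$ and $P$ are adjacent.

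The heart of the argument, and the step I expect to be the main obstacle, is establishing this claim, for which I would use the two structural facts about rays of a sun system. Let $X$ be a maximal clique with $N(r)\subseteq X$ and $X\neq N[r]$. First I would rule out $X$ containing any ray $r'$: a ray is simplicial, so $X$ would have to equal $N[r']$, and since no ray lies in another ray's neighbourhood we would obtain $N(r)\subseteq N(r')$, contradicting the incomparability $N(r)\bowtie N(r')$ that holds for distinct rays. Hence $X\subseteq F$, so $X$ is a maximal clique of the flower $G[F]$, that is, $X=P_j$ for some petal-clique. Finally, because $r$ is split on $P$ there is a \emph{unique} petal whose clique contains $N(r)$; as $N(r)\subseteq P_j$, uniqueness forces $P_j=P$, giving $X=P$.

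The only delicate points are bookkeeping at the boundary between the flower and $G$. I would note that a maximal clique of $G$ contained in $F$ is automatically a maximal clique of $G[F]$ (any enlargement inside $F$ would enlarge it in $G$ as well), so the step ``$X\subseteq F$ implies $X$ is a petal-clique'' is justified. One should also keep $N[r]$ and $P$ distinct maximal cliques of $G$; this is the non-degenerate case in which $N(r)$ is a proper subset of $P$, so that $P$ is not swallowed by $N[r]$ and genuinely appears as a node of $T$. Everything else is a direct application of the clique-tree connectivity axiom.
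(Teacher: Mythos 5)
Your proof is correct, but it follows a genuinely different route from the paper's. The paper's proof is local to a single vertex: it fixes one non-core neighbour $p \in N(r) \cap (P \setminus C)$, argues that every maximal clique containing $p$ is either $P$ itself or the clique $N[r']$ of some ray $r'$ adjacent to $p$, and then rules out $N[r]$ being adjacent to such an $N[r']$ via the asteroidal machinery: by (a generalization of) Lemma \ref{lem:asteroidal-leaves}, both $N[r]$ and $N[r']$ would be adjacent leaves of the subtree $T[R]$, forcing $T[R]$ to be a single edge, which is impossible since $|R| \geq 3$. You instead work with the whole neighbourhood $N(r)$: connectivity of $T^v$ for each $v \in N(r)$ shows that every node on the $T$-path from $N[r]$ to $P$ contains $N(r)$, and your classification of the maximal cliques containing $N(r)$ (only $N[r]$ and $P$, via simpliciality of rays, the incomparability $N(r) \bowtie N(r')$, and the uniqueness clause of Definition \ref{def:ray-types}) forces that path to have length one. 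Each approach buys something. The paper's is shorter given Lemma \ref{lem:asteroidal-leaves}; yours avoids the asteroidal/leaf machinery entirely, invoking the asteroidal hypothesis only through independence and neighbourhood-incomparability of the rays. Your argument is also more faithful to the letter of Definition \ref{def:ray-types}: the paper's step ``$r$ has neighbours in exactly one petal'' silently strengthens the definition's ``there is a unique petal containing $N(r)$''---if petals are allowed to overlap outside the core, a non-core neighbour $p$ of $r$ could lie in a second petal, which would require repairing the paper's classification of the cliques containing $p$, whereas your argument is immune since it only ever considers cliques containing all of $N(r)$. Both proofs rest on the same background assumptions the paper uses elsewhere (e.g.\ in Lemmas \ref{lem:two-cliques} and \ref{lem:cycle-petal}): that a maximal clique of $G$ containing no ray must be a petal of the flower, and that petals are maximal cliques of $G$ (equivalently here, that $N(r) \subsetneq P$), the latter of which you at least flag explicitly.
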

\begin{proof}
  Let $R$ be the set of asteroidal rays in $G$.
  By the definition of a split ray, we know $r$ has neighbours in exactly one petal $P$, and all other neighbours of $r$ lie in the core of $\mathcal{F}$. 
  Let $p \in P$ be a neighbour of $r$ in $P$.
  Let $T$ be any clique tree of $G$.
  Since $p$ is contained in $P$ and no other petal, it follows that any maximal clique that contains $p$ must be either the petal $P$ or a maximal clique containing some ray vertex adjacent to $p$.
  This means that if $N[r]$ is not adjacent to $P$, then it must be adjacent to some maximal clique $N[r']$ containing the vertex $p$.
  But since $r$ and $r'$ are both asteroidal, it follows that in the clique tree $T[R]$ the cliques $N[r]$ and $N[r']$ are both leaves, which is a contradiction.
\end{proof}

We first consider the case where the odd cycle $O$ consists solely of petal vertices.
First, note that the petal vertices of $A^G$ form a clique.
Thus if $A^G$ contains at least three petal vertices then $A^G$ contains a triangle as an induced cycle in a clique on $|O| \geq 3$ vertices.
\begin{lem}\label{lem:cycle-petal}
  Let $G$ be a non-trivial sun system such that $A^G$ has three or more petal vertices.
  Then $G$ is not a path graph.
\end{lem}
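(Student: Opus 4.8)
The plan is to argue by contradiction: suppose $G$ is a path graph and fix a clique-path tree $T$ of $G$. Since $A^G$ has three or more petal vertices, the underlying flower has three distinct split petals $P_1,P_2,P_3$, and each $P_i$ carries a ray $r_i$ that is split on it. As split rays, each $N[r_i]$ is a maximal clique and hence a node of $T$, and since $\set{r_1,r_2,r_3}\subseteq R$ with $R$ asteroidal, $\set{r_1,r_2,r_3}$ is itself an asteroidal triple. The two facts I will lean on are Lemma~\ref{lem:split-rays} (each $N[r_i]$ is adjacent to $P_i$ in $T$) and the defining property of a clique-path tree (every $T^v$ is a path, in particular connected).

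First I would locate a distinguished \emph{middle} petal. Every petal contains the core $C\neq\emptyset$, so for any $c_0\in C$ the path $T^{c_0}$ contains $P_1,P_2,P_3$; being collinear on this path, exactly one of them --- call it $P_m$, with the other two renamed $P_i,P_j$ --- is the tree-median, i.e.\ $P_m$ lies on the tree-path between $P_i$ and $P_j$. Next I record two properties of the ray $r_m$: by Lemma~\ref{lem:split-rays} the clique $N[r_m]$ is adjacent to $P_m$ in $T$; and $N(r_m)\cap C\subsetneq C$. The latter is where asteroidality enters: if instead $N(r_m)\supseteq C$, then deleting $N[r_m]$ removes all of $C$, and since distinct petals meet only in $C$ (petal vertices are private to their petal), the remaining parts of $P_i$ and $P_j$ fall into different components, separating $r_i$ from $r_j$ and contradicting that $R\setminus\set{r_m}$ stays connected. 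Hence I may fix a core vertex $c_m\in N(r_m)\cap C$ and a core vertex $c'\in C\setminus N(r_m)$.

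The heart of the argument plays out inside the path $T^{c_m}$. This path contains all three petals (they contain $c_m\in C$) and also $N[r_m]$ (since $c_m\in N(r_m)$), with $P_m$ sitting strictly between $P_i$ and $P_j$ and with $N[r_m]$ adjacent to $P_m$. There are only two possibilities. Either $N[r_m]$ is one of the two neighbours of $P_m$ along $T^{c_m}$, i.e.\ it lies on the tree-path from $P_m$ toward $P_i$ or toward $P_j$; but then, since $c'$ belongs to both $P_m$ and that other petal while $c'\notin N[r_m]$, the subtree $T^{c'}$ is disconnected, contradicting that $T$ is a clique tree. Or $N[r_m]$ is a third, distinct neighbour of $P_m$ on $T^{c_m}$, which is impossible because $T^{c_m}$ is a path and $P_m$ can have at most two neighbours on it. Either way we reach a contradiction, so $G$ is not a path graph.

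The main obstacle is precisely this geometric bookkeeping of the middle case: ruling out that $N[r_m]$ ``hides'' on the segment between $P_m$ and a neighbouring petal. The device that resolves it is the auxiliary core vertex $c'\in C\setminus N(r_m)$, whose mere existence is guaranteed by the asteroidal condition; connectivity of $T^{c'}$ then forbids $N[r_m]$ from lying between two petals, forcing it to be a genuine third branch at $P_m$ and breaking the path $T^{c_m}$. One could alternatively start from Lemma~\ref{lem:two-cliques}, which confines every $N[r]$ to two hub cliques $Q_1,Q_2$; but turning that into ``three split petals cannot coexist'' still requires the same analysis of where $N[r_m]$ attaches relative to the core-path, so the median argument above seems the more direct route.
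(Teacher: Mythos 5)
Your argument breaks at the step asserting $N(r_m)\cap C\subsetneq C$. You justify this with the parenthetical claim that ``distinct petals meet only in $C$ (petal vertices are private to their petal)'', but this is not part of the definition of a flower: $C$ is only required to be the intersection of \emph{all} the cliques of $\mathcal{F}$ (together with the Helly property and pairwise incomparability $P\bowtie P'$), and two petals may share vertices outside $C$ --- indeed the paper's footnote explicitly contrasts flowers with sunflowers, for which petal-disjointness \emph{is} required. As a concrete counterexample to your claim, take $C=\{c\}$, $P_1=\{c,a_1,z_{12}\}$, $P_2=\{c,a_2,z_{12},z_{23}\}$, $P_3=\{c,a_3,z_{23}\}$, and rays $r_1,r_2,r_3$ with $N(r_i)=\{c,a_i\}$. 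This is a non-trivial sun system: the cliques are pairwise incomparable and satisfy Helly, each $r_i$ is split on $P_i$ (so $A^G$ has three petal vertices), and the rays are asteroidal (e.g.\ after deleting $N[r_1]=\{r_1,c,a_1\}$ the path $r_2,a_2,z_{23},a_3,r_3$ survives, because $z_{23}$ keeps the petal residues connected). Yet $N(r_i)\supseteq C$ for every ray, so your vertex $c'\in C\setminus N(r_m)$ simply does not exist. Since your Case (a) --- $N[r_m]$ sitting on the tree-path between $P_m$ and another petal --- is refuted only via $c'$, your contradiction collapses precisely for such sun systems, and these are among the inputs the lemma must handle.

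The gap is fixable without $c'$, and the fix is essentially what the paper does. If $N[r_m]$ lies on the tree path strictly between $P_m$ and $P_i$, remove the tree edge joining $N[r_m]$ to $P_m$: the component of $N[r_m]$ contains $P_i$ and hence $N[r_i]$ (which is adjacent to $P_i$ by Lemma \ref{lem:split-rays}), while the other component contains $P_j$ and hence $N[r_j]$. So $N[r_m]$ is an \emph{interior} vertex of the tree path from $N[r_i]$ to $N[r_j]$, contradicting Lemma \ref{lem:asteroidal-leaves}, by which $N[r_i],N[r_j],N[r_m]$ are the three \emph{leaves} of $T[r_i,r_j,r_m]$. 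The paper packages exactly this leaf argument into Lemma \ref{lem:two-cliques} and then only ever uses a core vertex \emph{inside} $N(r_m)\cap C$ (guaranteed by the definition of a split ray) to exhibit a degree-three vertex in the corresponding $T^v$; your proof instead needs a core vertex \emph{outside} $N(r_m)$, which is exactly what cannot be guaranteed.
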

\begin{proof}
  Let $G$ be a non-trivial sun system such that $A^G$ has three or more petal vertices, and assume by way of contradiction that $G$ is a path graph.
  Let $\mathcal{F}$ denote the underlying flower of the sun system $G$, and let $P_1, P_2, P_3$ denote the three split petals in $\mathcal{F}$ guaranteed by assumption.
  It follows that there exists three ray vertices $r_1, r_2, r_3$ such that $r_i$ is split on $P_i$ for each $i = 1, 2, 3$, and note that the definition of a split vertex (Definition \ref{def:ray-types}) implies that the three ray vertices are distinct. 
  By the definition of a sun system we know that the three vertices $r_1, r_2, r_3$ are asteroidal.
  
  Let $T$ be any clique-path tree of $G$.
  By definition of a flower, the sets $P_1, P_2, P_3$ are each maximal cliques in $G$ and so it follows that they are vertices in the clique tree.
  By Lemma \ref{lem:split-rays} the sets $N_i$ and $P_i$ are adjacent in the clique tree for all $i$, so applying Lemma \ref{lem:two-cliques} yields a contradiction since $N_1, N_2, N_3$ must each be adjacent to one of two maximal cliques.
\end{proof}

\begin{cor}\label{cor:cycle-petal}
  Let $G$ be a non-trivial sun system such that $A^G$ has an odd cycle $O$ consisting entirely of petal vertices. 
  Then $G$ is not a path graph.
\end{cor}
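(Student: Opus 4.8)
The plan is to reduce the statement directly to Lemma~\ref{lem:cycle-petal}. The key observation is that every cycle has at least three vertices, and an odd cycle in particular has length at least three. Since the odd cycle $O$ is assumed to consist entirely of petal vertices, this immediately forces $A^G$ to contain at least three distinct petal vertices.

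Given this, I would simply invoke Lemma~\ref{lem:cycle-petal}, which states that a non-trivial sun system whose auxiliary graph has three or more petal vertices is not a path graph. Applying this to $G$ completes the argument. There is essentially no obstacle here; the real content was already packaged into Lemma~\ref{lem:cycle-petal} (via Lemmas~\ref{lem:two-cliques} and~\ref{lem:split-rays}), and the corollary is a one-line unfolding of the hypothesis.

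The only conceptual point worth flagging is that the hypothesis about an \emph{odd} cycle is not what is doing the work. Recall from Definition~\ref{def:auxiliary-graph} that the petal vertices of $A^G$ always form a clique, so the mere presence of three or more petal vertices already guarantees a triangle (hence an odd cycle) among them. In other words, the case of an odd cycle supported entirely on petal vertices collapses exactly to the count ``$A^G$ has at least three petal vertices,'' which is precisely the hypothesis of Lemma~\ref{lem:cycle-petal}. The corollary therefore discharges the first of the three cases in the case analysis outlined at the start of Section~\ref{sec:main}, namely the case where $O$ uses only petal vertices, and the remaining cases (odd cycles using ray vertices, or mixing ray and petal vertices) are where the substantive arguments will be needed.
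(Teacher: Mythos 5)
Your proof is correct and matches the paper's (implicit) argument exactly: the odd cycle forces at least three petal vertices, at which point Lemma~\ref{lem:cycle-petal} applies directly. Your side remark that the petal vertices form a clique, so three petal vertices already yield a triangle, is precisely the observation the paper makes in the text immediately preceding the lemma.
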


Next we consider the case where the odd cycle $O$ consists only of ray vertices.
We show that either three of the rays must share a core vertex in their neighbourhood (an easy contradiction against Proposition \ref{prop:path-graph-adjacency}), or a certain type of ``parity obstruction'' exists in the clique tree.
\begin{lem}\label{lem:cycle-rays}
  Let $G$ be a non-trivial sun-system such that $A^G$ has an odd cycle $O$ consisting only of ray vertices.
  Then $G$ is not a path graph.
\end{lem}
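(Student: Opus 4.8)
The plan is to assume $G$ is a path graph, fix a clique-path tree $T$, and extract from the odd cycle $O$ a proper $2$-colouring of $O$, which is impossible since $|O|$ is odd. Write the ray vertices of $O$ as $r_1, r_2, \ldots, r_m$ with $m$ odd, so that each cyclically consecutive pair satisfies $N(r_i) \cap N(r_{i+1}) \cap C \neq \emptyset$; for every edge of $O$ I fix a witnessing core vertex $v_i \in N(r_i) \cap N(r_{i+1}) \cap C$.

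First I would dispose of a degenerate case. If some core vertex $v$ lies in $N(r_i) \cap N(r_j) \cap N(r_k)$ for three distinct rays on $O$, then $\{r_i, r_j, r_k\}$ is an asteroidal triple (every three-element subset of the asteroidal set $R$ is itself asteroidal, since $R \setminus \set{r}$ stays in one component after deleting $N[r]$) all of whose members are adjacent to $v$, so Proposition \ref{prop:path-graph-adjacency} immediately shows $G$ is not a path graph. Hence I may assume that no core vertex is shared by three of the rays.

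The heart of the argument is a colouring obtained from Lemma \ref{lem:two-cliques}. Applying that lemma to $R' = \{r_1, \ldots, r_m\}$ yields two maximal cliques $Q_1, Q_2$ with $C \subseteq Q_1 \cap Q_2$ such that every $N[r_i]$ is adjacent in $T$ to $Q_1$ or to $Q_2$ (and, being a leaf of $T[R']$ by Lemma \ref{lem:asteroidal-leaves}, to exactly one of them). I colour $r_i$ by the clique its ray-clique attaches to. The key claim is that consecutive rays receive different colours: if $N[r_i]$ and $N[r_{i+1}]$ were both adjacent to a single clique $Q \in \{Q_1, Q_2\}$, I would derive a contradiction inside the subtree $T^{v_i}$. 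Indeed $v_i \in C \subseteq Q$, so $Q, N[r_i], N[r_{i+1}]$ all lie in $T^{v_i}$ with $Q$ adjacent to the two ray-cliques; since the flower is non-trivial there is a second maximal clique containing $C$ (for instance a petal distinct from any clique already named), which also lies in $T^{v_i}$, and the path from $Q$ to it inside $T^{v_i}$ supplies a third neighbour of $Q$ distinct from $N[r_i]$ and $N[r_{i+1}]$ (ray-cliques cannot lie on a path between two $C$-containing cliques, as they do not contain all of $C$ and are leaves of $T[R']$). Then $Q$ has degree at least three in $T^{v_i}$, contradicting that $T$ is a clique-path tree. This forces $Q_1 \neq Q_2$ and makes the colouring a proper $2$-colouring of the odd cycle $O$ --- the desired contradiction.

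The main obstacle is the key claim, and specifically making the degree count in $T^{v_i}$ fully rigorous: one must be certain that the ``third neighbour'' of $Q$ guaranteed by non-triviality is genuinely distinct from $N[r_i]$ and $N[r_{i+1}]$. This is precisely where non-triviality of the flower is used, and it is the feature that fails for the ordinary $3$-sun, which is a trivial sun system and a path graph. A secondary technical point is to treat split rays on the cycle uniformly with intersecting rays; here Lemma \ref{lem:split-rays} pins down the position of $N[r]$ relative to its petal, and the leaf property from Lemma \ref{lem:asteroidal-leaves} again prevents ray-cliques from serving as the connecting node on the core path. Once the key claim is established, the parity contradiction is immediate.
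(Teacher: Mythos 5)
Your proposal is correct and follows essentially the same route as the paper: dispose of a triply-shared core vertex via Proposition \ref{prop:path-graph-adjacency}, invoke Lemma \ref{lem:two-cliques} to obtain the two attachment cliques, show that consecutive rays on $O$ attach to different cliques, and conclude by parity of the odd cycle. The only divergence is at the alternation step, where the paper tersely asserts that $v \in Q_0 \cap Q_1 \cap N[r] \cap N[r']$ forces $T^v$ not to be a path, whereas you spell out the degree-three argument in $T^{v_i}$ (using non-triviality of the flower and the fact that ray cliques cannot serve as internal nodes on a path between two cliques containing $C$) in more detail than the paper itself provides.
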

\begin{proof}
  Assume that $G$ is a path graph by way of contradiction, and let $O = \set{r_0, r_1, \ldots, r_{t-1}}$, ordered so that $r_i, r_{i+1\pmod t}$ are each adjacent for all $i$ in $A^G$.
  By Proposition \ref{prop:path-graph-adjacency} we know that there is no vertex $v \in G$ such that the neighbourhood of $v$ contains three distinct rays.
  In particular, this implies that every core vertex is adjacent to at most two rays in $O$.
  Let $T$ be a clique-path tree of $G$.
  By Lemma \ref{lem:two-cliques} there are two maximal cliques $Q_0, Q_1$ in $T$ such that the maximal clique $N[r_i]$ is adjacent to either $Q_0$ or $Q_1$ in $T$ for all $i \in [0,t-1]$.
  
  If $r, r' \in O$ are adjacent rays then $N[r], N[r']$ cannot be adjacent to the same maximal clique in $T$.
  To see this, let $v \in N[r] \cap N[r'] \cap C$, and assume w.l.o.g.~that $N[r], N[r']$ are both connected to $Q_0$.
  Then $v \in Q_0 \cap Q_1$ and so it follows that $T^v$ is not a path, which is a contradiction.

  So, suppose w.l.o.g. that $N[r_0]$ is adjacent to $Q_0$.
  Then the previous fact combined with an easy induction shows that $N[r_i]$ is adjacent to $Q_j$ where $j \equiv i \pmod 2$.
  Since $t$ is odd, $t-1$ is even, and so $N[r_0]$ and $N[r_t]$ are both adjacent to $Q_0$, which is a contradiction.
\end{proof}

Finally we consider the case where the odd cycle can touch both petal and ray vertices.
Once again we show that the ``parity obstruction'' must exist.
\begin{lem}\label{lem:cycle-both}
  Let $G$ be a non-trivial sun-system such that $A^G$ has an odd-cycle $O$ containing petal and ray vertices.
  Then $G$ is not a path graph.
\end{lem}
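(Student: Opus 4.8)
The plan is to combine the two parity arguments already developed for the pure-ray case (Lemma \ref{lem:cycle-rays}) and the pure-petal case (Lemma \ref{lem:cycle-petal}) into a single argument that tracks a consistent $2$-colouring along the mixed odd cycle $O$. As before, I would assume for contradiction that $G$ is a path graph, fix a clique-path tree $T$, and collect all the rays appearing on $O$ into a set $R' \subseteq R$; since every vertex of $O$ is either a ray or a (split) petal, and split rays are witnessed by rays, this $R'$ consists of split or intersecting rays, so Lemma \ref{lem:two-cliques} applies. This yields two maximal cliques $Q_0, Q_1$ with $C \subseteq Q_0 \cap Q_1$ such that for every $r \in R'$ the clique $N[r]$ is adjacent to $Q_0$ or $Q_1$ in $T$. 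The key idea is that the petal vertices on $O$ must also be assigned to one of $Q_0, Q_1$ in a compatible way, so that the whole cycle gets a $\{0,1\}$-labelling in which adjacent vertices receive opposite labels — contradicting the oddness of $O$.

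The core of the argument is to establish that label changes at each of the three edge types. First I would handle the ray--ray edges exactly as in Lemma \ref{lem:cycle-rays}: if $\overline r, \overline{r'}$ are adjacent then $N(r) \cap N(r') \cap C \neq \emptyset$, so picking $v$ in this intersection forces $N[r], N[r']$ to attach to different cliques among $Q_0, Q_1$ (else $v \in Q_0 \cap Q_1$ makes $T^v$ a non-path). Second, for petal--ray edges I would use Lemma \ref{lem:split-rays}: if ray $r$ is split on petal $P$ then $N[r]$ is adjacent to $P$ in $T$, and since $N[r]$ is also adjacent to one of $Q_0, Q_1$, the petal $P$ must play the role of the cluster on the \emph{opposite} side, so I can assign each split petal $P$ a label equal to the side of $Q_0, Q_1$ that $P$ sits on (via the edge $P$--$N[r]$, the ray forces the opposite label). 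Third, the petal--petal edges are all present since petal vertices form a clique, and here I would invoke the structural fact behind Lemma \ref{lem:cycle-petal}/Lemma \ref{lem:two-cliques}: any two distinct split petals, each adjacent to its witnessing split ray, cannot both attach on the same side, again by a $T^v$-is-not-a-path argument using a common core vertex. Having shown every edge of $O$ is label-reversing, a walk around $O$ returns to the start after an odd number of sign flips, the desired contradiction.

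The main obstacle I expect is making the petal labelling well-defined and genuinely consistent with the ray labelling at the interface. A petal vertex $\overline p$ on $O$ may be adjacent on $O$ to both ray vertices and other petal vertices, and I must verify that the side assigned to $P$ from a petal--petal edge agrees with the side forced through any incident petal--ray edge — i.e. that the three local parity rules glue into one global $2$-colouring rather than producing conflicting constraints. The cleanest way to do this is probably to define a single function $\ell$ on $V(O)$ directly from the adjacency to $Q_0$ versus $Q_1$ in $T$ (for rays via $N[r]$, for petals via $P$ itself, both of which are nodes of $T$ sitting on the $Q_0$--$Q_1$ path by the clique-path property), and then prove the uniform claim that \emph{every} edge of $A^G$ restricted to $O$ connects a vertex attached on the $Q_0$ side to one attached on the $Q_1$ side. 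Once that single claim is verified across all three edge types, oddness of $O$ gives the contradiction immediately, so the real work is entirely in the case analysis establishing that one claim.
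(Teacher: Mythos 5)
Your overall architecture---fix a clique-path tree $T$, extract $Q_0, Q_1$ from Lemma \ref{lem:two-cliques}, and run a parity argument around the odd cycle---is the same argument the paper makes; the paper merely packages it differently (it first invokes Lemma \ref{lem:cycle-petal} to reduce to at most two petal vertices, then anchors the split petals as the cliques $Q_0, Q_1$ and runs the parity only along the ray segments of $O$). However, your central uniform claim fails at exactly the interface you flagged as the main obstacle: petal--ray edges are \emph{not} label-reversing. If $r$ is split on $P$, then Lemma \ref{lem:split-rays} makes $N[r]$ adjacent to $P$ in $T$; and the proof of Lemma \ref{lem:two-cliques} shows that every tree-neighbour of a ray clique is a core-containing clique, with at most two such cliques occurring as neighbours of ray cliques overall. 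Hence the split petal $P$ must itself be one of $Q_0, Q_1$ (this is what the paper means by ``clearly one of these two cliques must be $P$''). Consequently a split ray attaches to \emph{its own} petal: under your labelling $\ell$, the ray $r$ and the petal $P$ receive the \emph{same} label, not opposite labels. The picture you describe, in which $P$ ``plays the role of the cluster on the opposite side'' from the cluster that $N[r]$ attaches to, cannot occur.

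The proof can be repaired, but not with the claim as you state it. The correct accounting is: ray--ray edges reverse the label (your argument, taken from Lemma \ref{lem:cycle-rays}, is fine), petal--petal edges reverse the label (two distinct split petals must be the two distinct cliques $Q_0$ and $Q_1$), but petal--ray edges \emph{preserve} it. You then need one additional counting observation: each petal vertex of $O$ has exactly two cycle-neighbours, so if $a$ is the number of petal vertices on $O$ and $m$ the number of petal--petal edges on $O$, the number of petal--ray edges on $O$ equals $2a - 2m$, which is even. The number of label-reversing edges around $O$ is therefore $|O|$ minus an even number, hence odd, contradicting the fact that any $2$-labelling must flip an even number of times around a cycle. (Alternatively, follow the paper: with at most two petal vertices, anchor $P = Q_0$, respectively $P_0 = Q_0$ and $P_1 = Q_1$, and propagate parity along the ray path; the contradiction appears because the final ray is split on a petal whose label disagrees with the parity-forced one.) Without reversing your petal--ray rule and adding the evenness count, the proposal as written does not reach a contradiction.
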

\begin{proof}
  Assume not, by way of contradiction, and let $T$ be a clique-path tree of $G$.
  Let $\mathcal{F}$ denote the flower of $G$.
  By Lemma \ref{lem:cycle-petal} we may assume that $A^G$ has at most two petal vertices.
  We break into two sub-cases.

  \spcnoindent
  {\bf Case 1:} $O$ contains exactly one petal vertex.

  Let $p$ be the unique petal vertex in $O$, and let $r_0, r_1, \ldots, r_{t-1}$ be the ray vertices.
  Write $O = \set{p, r_0, r_1, \ldots, r_{t-1}}$ where $r_i$ and $r_{i+1 \pmod t}$ are adjacent for all $i$, and where $p$ is adjacent to $r_0$ and $r_{t-1}$.
  Let $P$ be the petal in $\mathcal{F}$ corresponding to the petal vertex $p$.
  Since both $N[r_0]$ and $N[r_{t-1}]$ are adjacent to $p$, it follows that the rays $r_0$ and $r_{t-1}$ are both split on $P$, which by Lemma \ref{lem:split-rays} implies that $N[r_0]$ and $N[r_{t-1}]$ are both connected to $P$ in $T$.
  By Lemma \ref{lem:two-cliques}, there exists two maximal cliques $Q_0, Q_1$ containing the core of the flower $\mathcal{F}$ such that $N[r_i]$ is adjacent to either $Q_0$ or $Q_1$ for all $i$.
  Clearly one of these two cliques must be $P$, so assume $P = Q_0$.
  
  As we argued in Lemma \ref{lem:cycle-rays}, we have that $N[r_i]$ and $N[r_{i+1}]$ are not adjacent to the same clique in $T$ for each $i$.
  Thus, if $N[r_0]$ is connected to $Q_0$, an easy induction implies that $N[r_i]$ is adjacent to $Q_j$ for $j \equiv i \pmod 2$.
  Since there are an even number of ray vertices, it follows that $t-1$ is odd, and so $N[t-1]$ is adjacent to $Q_1$, a contradiction.

  \spcnoindent
  {\bf Case 2:} $O$ contains exactly two petal vertices.
  
  This case is similar to the previous case.
  Now, let $p_0, p_1$ be the two petal vertices and again let $r_0, r_1, \ldots, r_{t-1}$ be the ray vertices.
  Write $O = \set{p_0, r_0, r_1, \ldots, r_{t-1}, p_1}$.
  Let $P_i$ be the petal in the flower $\mathcal{F}$ underlying $G$ corresponding to the petal vertex $p_i$.
  Since $r_0$ is adjacent to $p_0$ and $r_{t-1}$ is adjacent to $p_1$, Lemma \ref{lem:split-rays} implies that $N[r_0]$ is adjacent to $P_0$ and $N[r_{t-1}]$ is adjacent to $P_1$.
  By Lemma \ref{lem:two-cliques}, there exists two maximal cliques $Q_0, Q_1$ containing the core of the flower $\mathcal{F}$ such that $N[r_i]$ is adjacent to either $Q_0$ or $Q_1$ for all $i$.
  These two cliques must therefore be $P_0, P_1$, and so let $Q_0 = P_0, Q_1 = P_1$.
  Note again that $N[r_0]$ is adjacent to $Q_0$ and $N[r_{t-1}]$ is adjacent to $Q_1$.
  
  As we argued in Lemma \ref{lem:cycle-rays}, we have that $N[r_i]$ and $N[r_{i+1}]$ are not adjacent to the same clique in $T$ for each $i$.
  Thus, if $N[r_0]$ is connected to $Q_0$, an easy induction implies that $N[r_i]$ is adjacent to $Q_j$ for $j \equiv i \pmod 2$.
  Since there are an odd number of ray vertices in $O$, it follows that $t-1$ is even, and so $N[t-1]$ must be adjacent to $Q_0$, a contradiction.
\end{proof}

Theorem \ref{thm:main} follows as an easy consequence of these three lemmas.

\begin{proof}[Proof of Theorem \ref{thm:main}]
  By way of contradiction, suppose that $G$ is a path graph that contains a non-trivial, induced sun system $G$ where $A^G$ is non-bipartite.
  Then $A^G$ must contain an odd-cycle $O$ which lies completely on the petals, completely on the rays, or on both.
  However, each of these cases are impossible by Corollary \ref{cor:cycle-petal} and Lemmas \ref{lem:cycle-rays}, \ref{lem:cycle-both}, respectively.
  Thus $A^G$ must be bipartite.

  For the reverse direction, it is a simple matter to check each of the forbidden induced subgraphs for such a ``bad'' sun-system $G$.
  Rather than do this for each of the forbidden induced subgraphs, we collect the families of forbidden induced subgraphs from Figure \ref{fig:FISC} into sets according to where the odd cycle appears in the auxiliary graph $A^G$.
  \begin{enumerate}
  \item In $F_1, F_2, F_3, F_4, F_5$, the asteroidal triples are each adjacent to a universal vertex. 
    It is easy to construct a sun system with the asteroidal triples as rays and the universal vertex in the core; this yields an odd cycle on the ray vertices in $A^G$.
  \item In $F_6, F_7$, the top layer of vertices is the core, the second layer of vertices form three petals, and the third layer of vertices are the rays.
    The rays are therefore split on three distinct petals, which forms an odd cycle on the petal vertices in $A^G$
  \item In $F_{11}, F_{12}$, there are two petals which overlap in the large central clique.
    Each of the rays are attached to the central clique --- this forms an odd cycle on the ray vertices in $A^G$.
  \item In $F_{14}, F_{15}$, we have two central cliques (one containing the central vertex, and one that does not).
    Each of these cliques form a petal, and the two bottom rays are split on a single petal.
    This forms an odd cycle on the rays and petals (with a single split petal) on $A^G$.
  \item In $F_{13}$ and $F_{16}$ there are two central cliques corresponding to the central vertices that form the two petals.
    The two bottom rays are each split on different petals, and so this forms an odd cycle on the rays and petals (with two split petals) on $A^G$.
  \end{enumerate}
\end{proof}

\section{Remarks on Asteroids and Special Connections}
\label{sec:special}

We next discuss why asteroidal triples alone are not enough to characterize path graphs. 
Cameron et al.\cite{4} gave an interesting characterization of directed path graphs by forbidding asteroidal triples with some ``extra structure'', and they ask whether such a characterization could be given for path graphs.
Here we argue that a characterization of this type is not possible.

First, let us examine\footnote{Our presentation here is slightly different than that of Cameron et al.\cite{4}} the argument by Cameron et al.
A \emph{pointed graph} is a tuple $(G, u, v)$, where $G$ is a graph and $u, v \in G$ are two distinguished, distinct vertices.
If $\mathcal{G} = (G, u, v)$ and $\mathcal{G}' = (G', u', v')$ are pointed graphs, then an \emph{isomorphism} $\phi$ between $\mathcal{G}$ and $\mathcal{G}'$ is an isomorphism $\phi$ from $G$ to $G'$ such that $\phi(u) = u'$ and $\phi(v) = v'$.
If $\mathcal{S}$ is a family of pointed graphs and $G$ is a graph with two distinguished vertices $u, v$, we say that $u$ and $v$ are \emph{$\mathcal{S}$-connected} in $G$ if there is an induced subgraph $H$ of $G$ containing $u$ and $v$ such that $(H, u, v)$ is isomorphic to some pointed graph in $\mathcal{S}$.
A set of vertices $T$ in $G$ is an \emph{$\mathcal{S}$-asteroidal set} if $T$ is an asteroidal set and every pair of vertices in $T$ is $\mathcal{S}$-connected.
The next class of pointed graphs was introduced in \cite{4}.

\begin{defn}
  The family $\mathcal{S}_{\mathsf{directed}}$ contains the following set of pointed graphs, each classified into one of four types.
  The two distinguished vertices will always be labelled $u, v$.
  (See Figure \ref{fig:specialconnect})
  \begin{description}
  \item[Type 1:] The $3$-path $(P_3, u, v)$, with endpoints $u, v$ distinguished.
  \item[Type 2:] The graph defined by vertices $\set{u, v, a, b, c, d}$ and edges $ua, uc, ab, bc, ac, bd, cd, bv, dv$.
  \item[Type 3:] The graph defined by vertices $u, v, a, b, c, d, x, y$, with triangles $\set{u,a,c}, \set{v, b, d}$, and 4-cliques $\set{a, b, c, d}$, $\set{x, a, b, c}$, and `$\set{y, a, b, d}$.
  \item[Type 4:] For any $t \geq 1$, the graph defined by vertices $u, v, z_0, \ldots, z_{2t+2}, z_1', \ldots, z_{2t}'$, where vertices $z_0, \ldots, z_{2t+2}$ form a clique and vertex $z_k'$ (for $0 \leq k \leq 2t+1$, with $z_0' = u, z'_{2t+1} = v$) is adjacent to $z_k$ and $z_{k+1}$.
  \end{description}
\end{defn}

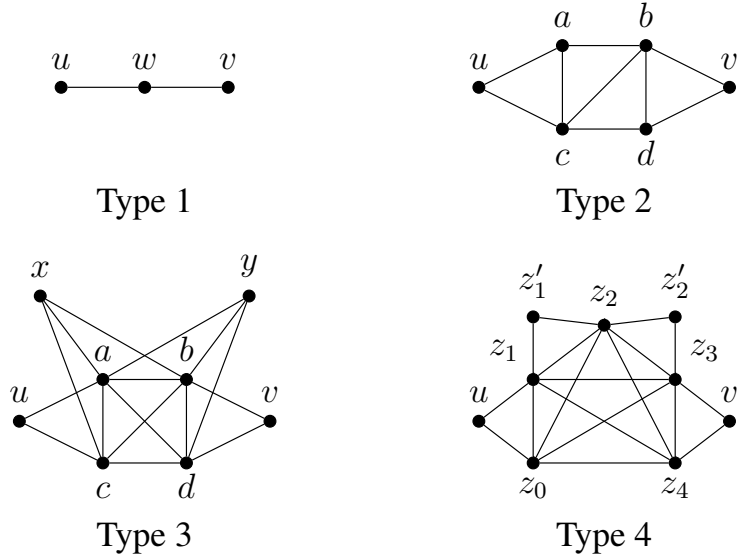
\begin{figure}[htbp]
  \centering
  \resizebox{0.65\linewidth}{!}{
    \begin{tikzpicture}
      \node[circle, draw, fill=black!100, inner sep=1pt, minimum width=4pt, label={$u$}] (u) at (0,0) {};
      \node[circle, draw, fill=black!100, inner sep=1pt, minimum width=4pt, label={$w$}] (w) at (1,0) {};
    \node[circle, draw, fill=black!100, inner sep=1pt, minimum width=4pt, label={$v$}] (v) at (2,0) {};
    \node[draw=none] at (1, -1.4) {Type 1};
    \foreach \from/\to in {u/w, w/v}
    {\draw (\from) -- (\to);}

    \pgfmathsetmacro{\xoffset}{5}
    \node[circle, draw, fill=black!100, inner sep=1pt, minimum width=4pt, label={$u$}] (u1) at (0+\xoffset,0) {};
    \node[circle, draw, fill=black!100, inner sep=1pt, minimum width=4pt, label={$a$}] (a1) at (1+\xoffset,0.5) {};
    \node[circle, draw, fill=black!100, inner sep=1pt, minimum width=4pt, label={[yshift=-0.65cm]$c$}] (c1) at (1+\xoffset,-0.5) {};
    \node[circle, draw, fill=black!100, inner sep=1pt, minimum width=4pt, label={$b$}] (b1) at (2+\xoffset,0.5) {};
    \node[circle, draw, fill=black!100, inner sep=1pt, minimum width=4pt, label={[yshift=-0.65cm]$d$}] (d1) at (2+\xoffset,-0.5) {};
    \node[circle, draw, fill=black!100, inner sep=1pt, minimum width=4pt, label={$v$}] (v1) at (3+\xoffset,0) {};
    \node[draw=none] at (1.5+\xoffset, -1.4) {Type 2};
    \foreach \from/\to in {u1/a1, u1/c1, a1/c1, a1/b1, c1/d1, c1/b1, b1/d1, b1/v1, d1/v1}
    {\draw (\from) -- (\to);}

    \pgfmathsetmacro{\xoffset}{-0.5}    
    \pgfmathsetmacro{\yoffset}{-4}
    \node[circle, draw, fill=black!100, inner sep=1pt, minimum width=4pt, label={$u$}] (u2) at (0+\xoffset,0+\yoffset) {};
    \node[circle, draw, fill=black!100, inner sep=1pt, minimum width=4pt, label={$a$}] (a2) at (1+\xoffset,0.5+\yoffset) {};
    \node[circle, draw, fill=black!100, inner sep=1pt, minimum width=4pt, label={[yshift=-0.65cm]$c$}] (c2) at (1+\xoffset,-0.5+\yoffset) {};
    \node[circle, draw, fill=black!100, inner sep=1pt, minimum width=4pt, label={$b$}] (b2) at (2+\xoffset,0.5+\yoffset) {};
    \node[circle, draw, fill=black!100, inner sep=1pt, minimum width=4pt, label={[yshift=-0.65cm]$d$}] (d2) at (2+\xoffset,-0.5+\yoffset) {};
    \node[circle, draw, fill=black!100, inner sep=1pt, minimum width=4pt, label={$v$}] (v2) at (3+\xoffset,0+\yoffset) {};
    \node[circle, draw, fill=black!100, inner sep=1pt, minimum width=4pt, label={$x$}] (x2) at (0.25+\xoffset,1.5+\yoffset) {};
    \node[circle, draw, fill=black!100, inner sep=1pt, minimum width=4pt, label={$y$}] (y2) at (2.75+\xoffset,1.5+\yoffset) {};
    \node[draw=none] at (1.5+\xoffset, -1.4+\yoffset) {Type 3};
    \foreach \from/\to in {u2/a2, u2/c2, a2/c2, a2/b2, c2/d2, c2/b2, b2/d2, b2/v2, d2/v2, a2/d2, x2/a2, x2/c2, x2/b2, y2/a2, y2/b2, y2/d2}
    {\draw (\from) -- (\to);}

    \pgfmathsetmacro{\xoffset}{4.5}    
    \pgfmathsetmacro{\yoffset}{-4.5}
    \node[circle, draw, fill=black!100, inner sep=1pt, minimum width=4pt, label={$u$}] (u3) at (0.5+\xoffset,0.5+\yoffset) {};
    \node[circle, draw, fill=black!100, inner sep=1pt, minimum width=4pt, label={$v$}] (v3) at (3.5+\xoffset,0.5+\yoffset) {};
    \node[circle, draw, fill=black!100, inner sep=1pt, minimum width=4pt, label={[yshift=-0.65cm]$z_0$}] (z0) at (1.15+\xoffset,0+\yoffset) {};
    \node[circle, draw, fill=black!100, inner sep=1pt, minimum width=4pt, label={[xshift=-0.35cm]$z_1$}] (z1) at (1.15+\xoffset,1+\yoffset) {};
    \node[circle, draw, fill=black!100, inner sep=1pt, minimum width=4pt, label={$z_2$}] (z2) at (2+\xoffset,1.65+\yoffset) {};
    \node[circle, draw, fill=black!100, inner sep=1pt, minimum width=4pt, label={[xshift=0.35cm]$z_3$}] (z3) at (2.85+\xoffset,1+\yoffset) {};
    \node[circle, draw, fill=black!100, inner sep=1pt, minimum width=4pt, label={[yshift=-0.65cm]$z_4$}] (z4) at (2.85+\xoffset,0+\yoffset) {};
    \node[circle, draw, fill=black!100, inner sep=1pt, minimum width=4pt, label={$z_1'$}] (z11) at (1.15+\xoffset,1.75+\yoffset) {};
    \node[circle, draw, fill=black!100, inner sep=1pt, minimum width=4pt, label={$z_2'$}] (z22) at (2.85+\xoffset,1.75+\yoffset) {};
    \node[draw=none] at (2+\xoffset, -0.9+\yoffset) {Type 4};
    \foreach \a in {0, ..., 4}{
      \foreach \b in {\a, ..., 4}{
        \draw (z\a) -- (z\b);
      }
    }
    \foreach \from/\to in {u3/z0, u3/z1, v3/z3, v3/z4, z11/z1, z11/z2, z22/z2, z22/z3}
    {\draw (\from) -- (\to);}
  \end{tikzpicture}
  }
  \caption{The family $\mathcal{S}_{\mathsf{directed}}$ of ``special connections'' \cite{4}.}
  \label{fig:specialconnect}
\end{figure}

The main result of Cameron et al.\cite{4} is the following:

\begin{thm}
  A chordal graph $G$ is a directed path graph if and only if $G$ does not contain an $\mathcal{S}_\mathsf{directed}$-asteroidal triple.
\end{thm}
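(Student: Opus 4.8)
The plan is to pass to the clique-tree picture and recast both directions as a statement about \emph{orienting} a clique tree. The starting point is the analogue of Gavril's theorem for directed path graphs: a chordal graph $G$ is a directed path graph if and only if it has a clique tree $T$ that can be rooted so that, for every vertex $v$, the subtree $T^v$ is a \emph{directed} (root-monotone) path, i.e.\ an ancestor-to-descendant path. I would take this equivalence as the working definition. Fixing an asteroidal triple $\set{u,v,w}$, Lemma~\ref{lem:asteroidal-leaves} tells us that every $T[u,v,w]$ has the same combinatorial shape --- three leaves $Q_u, Q_v, Q_w$ joined through a single center clique $c$ --- so the whole question reduces to how the three branches emanating from $c$ may be oriented relative to the root.

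For the forward direction I would isolate a single semantic lemma: if $u$ and $v$ are $\mathcal{S}_{\mathsf{directed}}$-connected, then in every rooted clique tree as above the two branches at $c$ leading to $Q_u$ and $Q_v$ must be oppositely oriented (one pointing toward the root, the other away). The reason is that each of the four gadget types is itself a directed path graph whose representations force a monotone clique-path running from the $u$-side through $c$ to the $v$-side, and such a path can cross $c$ only by using one incoming and one outgoing branch; for Type~1, for instance, the midpoint of the $P_3$ is a common neighbour of $u$ and $v$ whose clique-path is forced across $c$, while Types~2--4 supply the same forcing when no single common neighbour is available. I would verify this gadget-by-gadget, using Lemma~\ref{lem:asteroidal-leaves} and Proposition~\ref{prop:path-graph-adjacency} to pin down the relevant maximal cliques. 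Granting the lemma, a special asteroidal triple makes all three pairs of branches at $c$ pairwise oppositely oriented; but with only two orientations available for three branches, no assignment can make all three pairs disagree --- this is exactly the non-$2$-colorability of a triangle, the same parity obstruction that drives the main theorem. Hence a directed path graph contains no special asteroidal triple.

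The reverse direction is where the real work lies. Assuming $G$ has no $\mathcal{S}_{\mathsf{directed}}$-asteroidal triple, I would construct a rooted clique tree with the monotone-path property by setting up an orientation-constraint graph on the branches of $T$ and showing it is consistently satisfiable. Two ingredients are needed. First, a \emph{completeness} statement dual to the semantic lemma: whenever a representation is forced to align two branches (equivalently, whenever some vertex's clique-path is forced across a center monotonically), this forcing must be \emph{certified} by an induced copy of one of Types~1--4 connecting the corresponding pair; this is what guarantees that the absence of special connections leaves enough orientation freedom. Second, a \emph{global assembly} step that patches the locally free orientations at every high-degree node of $T$ into a single coherent rooting, showing no long-range cyclic conflict survives once every local triangle-type conflict has been excluded. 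I expect the completeness step to be the main obstacle: it amounts to a careful structural enumeration proving that Types~1--4 exhaust all minimal ways a chordal graph can force monotonicity between two vertices, and getting this list exactly right --- neither too coarse nor missing a case --- is precisely the delicate part, since an incomplete family would break the reverse implication while a redundant one would weaken the theorem. The global assembly, by contrast, should follow once the constraint graph is known to be free of odd obstructions, by a rooting and parity induction on $T$ analogous to those in Lemmas~\ref{lem:cycle-rays} and~\ref{lem:cycle-both}.
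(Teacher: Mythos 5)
First, a point of calibration: the paper does not prove this statement at all --- it is Cameron, Ho\'ang and L\'ev\^eque's theorem, quoted from \cite{4} as background for the discussion in Section \ref{sec:special} (the paper's own contribution, Theorem \ref{thm:main}, concerns path graphs). So there is no in-paper proof to compare yours against, and I can only judge the proposal on its own terms. On those terms, your forward direction is a reasonable outline: the rooted clique-tree characterization of directed path graphs you adopt as a working definition is the standard one, and the orientation obstruction is in fact even cleaner than the triangle two-coloring you invoke --- at the unique degree-three node $c$ of $T[u,v,w]$ given by Lemma \ref{lem:asteroidal-leaves}, at most one incident branch can point toward the root, so no assignment makes all three pairs of branches oppositely oriented. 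What you have not done, even for this half, is the gadget-by-gadget verification that each of Types 1--4 really forces opposite orientations; Type 1 is easy (the midpoint's monotone clique path must cross $c$), but for Types 2--4 the forcing cliques must actually be exhibited, and you only promise to do so.

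The genuine gap is the reverse direction, and you have essentially flagged it yourself. Your ``completeness'' lemma --- that Types 1--4 exhaust all minimal ways a chordal graph can force monotonicity between two branches --- is not a step toward the theorem; it \emph{is} the theorem. All of the mathematical content of Cameron et al.'s result lives in exactly that enumeration, and your proposal contains no argument for it: no structural induction, no case analysis, and no reduction to a known result (for instance, one standard route to the hard direction is via Panda's forbidden-induced-subgraph characterization of directed path graphs \cite{2}, verifying that each forbidden subgraph yields a special asteroidal triple --- with care, since asteroidality in an induced subgraph does not automatically transfer to the ambient graph). The ``global assembly'' step is also not automatic: excluding an odd conflict at each individual center does not by itself exclude long-range inconsistencies among orientation constraints imposed at different nodes of $T$, and your appeal to the parity inductions of Lemmas \ref{lem:cycle-rays} and \ref{lem:cycle-both} does not carry over, because those arguments rely on the specific two-clique structure supplied by Lemma \ref{lem:two-cliques} for sun systems, of which you have no analogue here. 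As it stands, the proposal correctly identifies what must be proved and plausibly handles the easy half, but the hard half is deferred rather than proved.
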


At the end of their paper, Cameron et al. ask whether or not it is possible to give a characterization of path graphs using some ``special type'' of asteroidal triples.
Of course, if we use the framework outlined above, there is a simple (albeit unsatisfying) answer.
Define
\[ \mathcal{S}_{\mathsf{path}} = \set{(G, u, v) \st G \emph{ is a forbidden induced subgraph of path graphs and } u, v \in G}.\]
Clearly if a chordal graph $G$ contains an $\mathcal{S}_{\mathsf{path}}$-asteroidal triple then $G$ is not a path graph for the simple reason that $G$ would contain a forbidden induced subgraph of path graphs.
However, this argument is a ``cheat'' in that it is not using any asteroidal structure.

The next proposition gives strong evidence that it will be difficult to give \emph{any} nice characterization of path graphs using only asteroidal triples.
Intuitively, it says that there exists a family of chordal, non-path graphs $G$ which have many asteroidal triples, but for \emph{every} asteroidal triple there is a way to construct the clique tree for $G$ such that examining the asteroidal triple in the clique tree does not certify that $G$ is a path graph.
One such family is the family of subgraphs $F_{11}(4k)$ from Figure \ref{fig:FISC} for $k \geq 3$.
Figure \ref{fig:badgraph} below depicts the graph from this family corresponding to $k = 4$.

\begin{figure}[thb]
  \centering
  \includegraphics[scale=0.15, clip=true, trim=0 0 0 10mm]{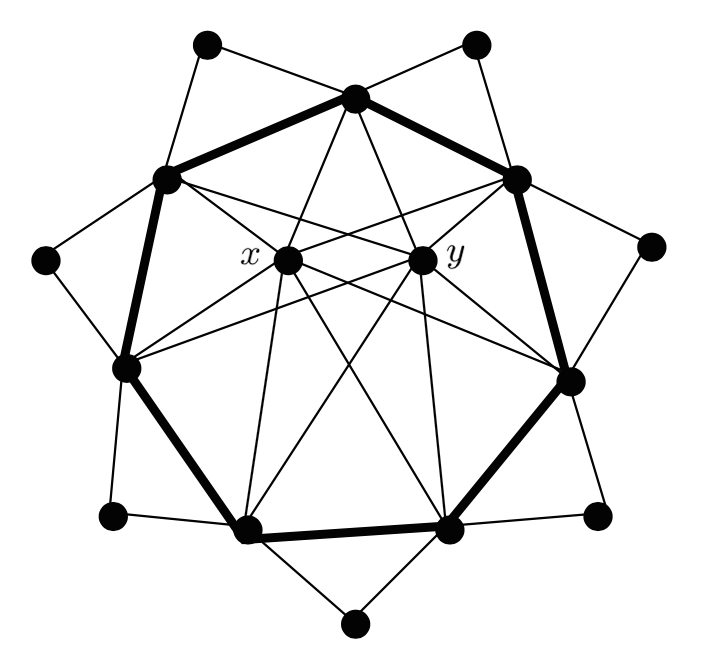}
  \caption{$F_{11}(16)$}
  \label{fig:badgraph}
\end{figure}

\begin{prop}\label{prop:at_obstruction}
  Consider $G = F_{11}(4k)$ for any $k \geq 3$ (cf.~Figure \ref{fig:FISC}) and let $A$ be any asteroidal triple in $G$.
  Then there exists a clique tree $T_A$ of $G$ such that for every $r \in A$ and every $v \in N(r)$, $T_A^v$ is a path in $T_A$.
\end{prop}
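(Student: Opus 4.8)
The plan is to first pin down the combinatorial structure of $G = F_{11}(4k)$ and its asteroidal triples, and then to build a tailored clique tree for each asteroidal triple by a two-colouring argument. First I would record the maximal cliques. Writing the core of the underlying flower as $C = \{c_0, \ldots, c_{2k-2}\}$ (a clique whose vertices are indexed cyclically), the two petals are $P_1 = C \cup \{x\}$ and $P_2 = C \cup \{y\}$, and each ray $r_i$ satisfies $N(r_i) = \{c_i, c_{i+1}\}$, so that $N[r_i]$ is a maximal clique (a triangle). These $2k+1$ cliques are all the maximal cliques. I would then show that the asteroidal triples of $G$ are exactly the triples of distinct rays: deleting $N[c]$ for a core vertex $c$, or $N[x]$ (resp.\ $N[y]$), removes all of $C$ and hence isolates every ray, so no core or petal vertex can lie in an asteroidal triple; conversely, deleting $N[r_i] = \{r_i, c_i, c_{i+1}\}$ leaves the remaining core $C \setminus \{c_i, c_{i+1}\}$ a nonempty clique to which every other ray stays attached, so any three rays are asteroidal.

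Next, for a fixed asteroidal triple $A = \{r_a, r_b, r_c\}$, I would construct $T_A$ from the backbone edge $P_1 - P_2$ (which is forced, since any clique tree must keep $P_1, P_2$ adjacent as they are the only two cliques containing the whole core $C$) by attaching each ray-clique $N[r_i]$ to either $P_1$ or $P_2$. Since $N(r_i) \subseteq C$, either attachment is valid, and every such assignment yields a genuine clique tree. The only nontrivial subtrees $T_A^v$ are those for core vertices $v = c_j$ (the subtrees $T_A^x, T_A^y, T_A^{r_i}$ are single nodes), and each $c_j$ lies in exactly the four cliques $P_1, P_2, N[r_{j-1}], N[r_j]$. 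A short case check shows that the induced subtree on these four nodes is a path precisely when $r_{j-1}$ and $r_j$ are attached to \emph{different} petals (otherwise the shared petal has degree three, giving a star). Thus the requirement that $T_A^v$ be a path for every $v \in N(r)$, $r \in A$, translates to a two-colouring of the rays that separates $r_{j-1}$ from $r_j$ for every relevant core vertex $c_j \in N(r_a) \cup N(r_b) \cup N(r_c)$.

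The heart of the argument, and the step I expect to be the main obstacle, is showing that these relevant constraints can always be satisfied. The constraints are exactly the edges of the auxiliary cycle $A^G = C_{2k-1}$ that are incident to $\{r_a, r_b, r_c\}$, so the constraint graph $H$ is a subgraph of an odd cycle. I would argue that three vertices cannot form a vertex cover of $C_{2k-1}$ once the cycle is long enough (its cover number is $\lceil (2k-1)/2 \rceil = k$), so $H$ misses at least one edge of the cycle, is therefore a disjoint union of paths, and admits a proper two-colouring. Extending this colouring arbitrarily to the remaining rays and building $T_A$ as above then completes the construction. The delicate point is precisely this vertex-cover estimate: it is exactly the obstruction that for a short odd cycle a single asteroidal triple can be incident to every edge, so that no clique tree can hide the failure of $G$ to be a path graph. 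Establishing that this cannot happen here — that every asteroidal triple leaves at least one constraint edge free — is the crux on which the whole proposition rests.
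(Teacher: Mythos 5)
Your construction is essentially the paper's: both proofs exploit the fact that the clique tree consists of the forced edge between the two central cliques $P_1,P_2$ with each triangle $N[r_i]$ hung on one of them, and then choose the attachments to protect the triple $A$. Your version is actually \emph{more} careful than the paper's: the paper attaches the ray cliques outside $A$ ``arbitrarily,'' which is an oversight, since an edge $\{r_{j-1},r_j\}$ of the auxiliary cycle with only one endpoint in $A$ still imposes a constraint (the shared core vertex $c_j$ lies in $N(r_j)$, so $T^{c_j}$ must be a path); your constraint graph $H$ records exactly these edges, as it should.

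The genuine gap is the point you yourself flag as the crux, and it cannot be closed in the generality stated. Your two-colouring exists iff $H$ is not the whole odd cycle, i.e.\ iff $A$ is not a vertex cover of $C_{2k-1}$; since the cover number of $C_{2k-1}$ is $k$, the counting argument ``three vertices cannot cover the cycle'' is valid only for $k \geq 4$. At $k=3$ it fails, and not merely as an artifact of the method: in $F_{11}(12)$ the triple $A = \{r_0, r_2, r_4\}$ is asteroidal and \emph{is} a vertex cover of the auxiliary $C_5$ (its neighbourhoods exhaust the core), so $H$ is the full odd cycle. Since every clique tree of this graph has your canonical form (the edge $P_1P_2$ is forced, and a triangle adjacent to neither $P_1$ nor $P_2$ would have to be adjacent to both neighbouring triangles, creating a cycle in the tree), your reduction applies to \emph{all} clique trees, and the parity obstruction shows that some $v \in N(r_0) \cup N(r_2) \cup N(r_4)$ has $T^v$ a star in every clique tree. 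Hence the proposition as stated is false for $k=3$; the paper's own construction breaks silently at the same point (its rule ``disjoint pairs both attach to $Q_x$, intersecting pairs split'' demands $r_0, r_2 \to Q_x$ and $r_2, r_4 \to Q_x$ yet $r_0, r_4$ separated, which is inconsistent). The correct repair is to prove the statement for $k \geq 4$ (note the paper's illustrating figure uses $k=4$): there $2k-1 \geq 7$, the three vertices of $A$ meet at most six cycle edges, $H$ misses an edge and is a disjoint union of paths, and your colouring argument completes the proof exactly as you outline.
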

\begin{proof}
  Let $x, y$ denote the two central vertices in $G$, and let $R$ be the set of rays.
  Since $k \geq 3$ we know that $|R| > 3$, and clearly $G$ has $|R| + 2$ maximal cliques: the two ``central cliques'' $Q_x, Q_y$ containing $x$ and $y$, respectively, and a triangle $Q_r$ for each ray $r \in R$.
  It is easy to see that $A \subseteq R$, and note that at least two of the vertices $r, r' \in A$ satisfy $N(r) \cap N(r') = \emptyset$ since $|R| > 3$.
  We define the clique tree $T_A$ as follows.
  Connect the vertices $Q_x$ and $Q_y$.
  If $r, r' \in A$ are distinct and satisfy $N(r) \cap N(r') = \emptyset$, attach $Q_r$ and $Q_{r'}$ to $Q_x$, and otherwise connect $Q_r$ to $Q_x$ and $Q_{r'}$ to $Q_y$.
  Connect all other cliques $Q_{r^*}$ to $Q_x$ or $Q_y$ arbitrarily.
  Then for every $r \in A$ and $v \in N(r)$ it is easy to see that $T_A^v$ is a path.
\end{proof}

A chordal graph $G$ is not a path graph if and only if it does not have a clique-path tree --- or, in other words, if for any clique tree of $G$ there must be a vertex $v \in G$ such that $T^v$ is not a path.
Thus, if we were able to define some ``special type'' of asteroidal triple that obstructed the existence of a clique-path tree we would have to use this triple to show that in any clique tree $T$ of $G$ there is a vertex $v$ such that $T^v$ is not a path.
The previous proposition shows that if $G$ is a chordal, non-path graph, $A$ is any asteroidal triple in $G$, and $T$ is any clique-tree of $G$, then it is not sufficient to study the local structure of $A$ in $T$ in order to show that $G$ is not a path graph.
In the new characterization (cf.~Theorem \ref{thm:main}) the case of $F_{11}(4k)$ for $k \geq 3$ falls under Lemma \ref{lem:cycle-rays}, where the obstruction in the clique-tree is due to the parity argument that must examine all of the asteroidal vertices at once.



\section{Conclusion}
\label{sec:conclusion}

Theorem \ref{thm:main} gives a new characterization of path graphs by forbidding particular sun systems from chordal graphs.
Additionally, we give some strong evidence that asteroidal triples are not enough to capture the underlying structure of path graphs.

One of the main open problems with respect to path graphs is a linear time recognition algorithm. 
Both interval and chordal graphs have $O(m+n)$ time recognition algorithms \cite{13, 12}.
A linear time recognition algorithm for path graphs was proposed in \cite{15}, but is not considered correct (we refer the reader to Section 2.1.4 in \cite{14}). 
Chaplick et al.~showed that recognition of directed path graphs is no more difficult than the recognition of path graphs \cite{9}.
In \cite{7}, Gavril gave the first recognition algorithm for path graphs that runs in $O(n^4)$ time, and Sch\"{a}ffer \cite{16} then Chaplick \cite{14} gave $O(mn)$ time algorithms. 
We raise the question whether the structure of sun systems can lead to a faster than $O(mn)$ time recognition algorithm for path graphs. 

There are also several ``structural'' open problems suggested by our work.
Our characterization of path graphs should be contrasted with the characterization of directed path graphs by Chaplick et al. \cite{9} by forbidding odd suns. We raise the question of whether the maximum cardinality clique search algorithm developed in \cite{9} to recognize directed path graphs - and extract an odd sun if one exists - can prove useful for the recognition of sun-systems and thus of path graphs. 
Also, recall that the classical characterization of interval graphs by forbidden induced subgraphs --- given by Lekkerkerker and Boland \cite{6} --- proceeds by first using the characterization of interval graphs as chordal graphs which are asteroidal-triple free.
Since the ``bad'' sun systems have a particular structure on their asteroidal vertices, it is natural to ask whether we can use them to give a simplified proof of the characterization of path graphs by forbidden induced subgraphs \cite{3}.


\subsubsection*{Acknowledgments.} The authors thank Derek Corneil and Ekkehard K\"{o}hler for their helpful suggestions and valuable comments.
\bibliographystyle{splncs}

\begin{thebibliography}{}	
\bibitem{4}
	Cameron, K., Ho{\`{a}}ng, C.T., L{\'{e}}v{\^{e}}que, B.: 
	Characterizing directed path graphs by forbidden asteroids.
	Journal of Graph Theory 68(2), 103-112 (2011)

\bibitem{14}
	Chaplick, S.: 
	PQR-trees and undirected path graphs.
	M.Sc. Thesis, Dept. of Computer Science, University of Toronto, Canada (2008)
	
\bibitem{9}
	Chaplick, S., Gutierrez, M., L{\'{e}}v{\^{e}}que, B., Tondato, S.B.:
	From Path Graphs to Directed Path Graphs.
	36$^{th}$ Graph Theoretic Concepts in Computer Science, WG 2010, 256-265 (2010)
	
\bibitem{13}
	Corneil, D.G., Olariu, S., Stewart, L.:
	The LBFS structure and recognition of interval graphs.
	SIAM Journal on Discrete Mathematics, 23(4), 1905-1953 (2009)

\bibitem{15}
	Dahlhaus, E., Bailey, G.: 
	Recognition of path graphs in linear time. 
	5th Italian Conference on Theoretical Computer Science, 201-210 (1995)

\bibitem{11}
	Edmonds, J.:
	Paths, Trees, and Flowers.
	Canadian Journal of Mathematics 17(3), 449-467 (1965)	
	
\bibitem{10}
	Erd\"{o}s, P., Rado, R.:
	Intersection theorems for systems of sets.
	Journal of the London Mathematical Society, s1-35(1), 85-90 (1960)

\bibitem{5}
	Gavril, F.:
	The intersection graphs of subtrees in trees are exactly the chordal graphs.
	Journal of Combinatorial Theory, Series B 16(1), 47-56 (1974)

\bibitem{7}
	Gavril, F.:
	A recognition algorithm for the intersection graphs of paths in trees.
	Discrete Mathematics, 23(3), 211-227 (1978)

\bibitem{6}
	Lekkerkerker, C.G, Boland, J.C.:
	Representation of a finite graph by a set of intervals on the real line.
	Fundamenta Mathematicae 51, 45-64 (1962)	

\bibitem{3}
	L{\'{e}}v{\^{e}}que, B.,  Maffray, F., Preissmann, M.:
	Characterizing path graphs by forbidden induced subgraphs.
	Journal of Graph Theory 62(4), 369-384 (2009)		
	
\bibitem{1}
	Lin, I.J., McKee, T. A., West D.B.:
	The leafage of a chordal graph.
	Discussiones Mathematicae Graph Theory 18, 23-48 (1998)
	
\bibitem{8}
	Monma, C.L., Wei, V.K.:
	Intersection graphs of paths in a tree.
	Journal of Combinatorial Theory, Series B 41(2), 141-181 (1986)

\bibitem{2}
	Panda, B.S:
	The forbidden subgraph characterization of directed vertex graphs.
	Discrete Mathematics 196(1), 239-256(1999)
	
\bibitem{12}
	Rose, D.J., Tarjan, R.E., Lueker, G.S.:
	Algorithmic aspects of vertex elimination on graphs.
	SIAM Journal on Computing 5(2), 266-283 (1976)

\bibitem{16}
	Sch\"{a}ffer, A.A.: 
	A faster algorithm to recognize undirected path graphs. 
	Discrete Applied Math. (43), 261-295 (1993)
	
\end{thebibliography}

\end{document}